\tikzstyle{place}=[
\tikzstyle{place2}=[
\newcommand{\degs}[2]{\text{deg}_{#1}(#2)}
\newcommand{\tw}{{\mathbf{tw}}}
\newcommand{\pw}{{\mathbf{pw}}}
\newcommand{\bw}{{\mathbf{bw}}}
\newcommand{\mids}{{\bf mid}}
\renewcommand{\leq}{\leqslant}
\renewcommand{\geq}{\geqslant}
\def\debut{\begin{itemize}\item[{\bf [[}]\small}
\def\term{\hfill {\bf]]} \end{itemize} }
\newcommand{\NN}{\mathcal{N}}
\newcommand{\N}{\mathbb{N}}
\newtheorem{claimN}{Claim}
\renewenvironment{proof}[1][]{\par \noindent {\bf Proof:#1}\ }{\hfill$\Box$}
\newcommand{\probl}[3]{
\begin{flushleft}
\fbox{
\begin{minipage}{13cm}
\noindent {\sc #1}\\
          {\bf Input:} #2\\
          {\bf Question:} #3
\end{minipage}}
\medskip
\end{flushleft}
}
\newcommand{\probls}[4]{
\begin{flushleft}
\fbox{
\begin{minipage}{13cm}
\noindent {\textsc {#1}}\\
          {\bf Input:} #2\\
          {\bf Parameter:} #4\\
          {\bf Question:} #3
\end{minipage}}
\medskip
\end{flushleft}
}
\newcommand{\RR}{\mathcal{R}}
\newcommand{\es}{\emptyset}
\newcommand{\bs}{\backslash}
\newcommand{\set}[1]{V[#1]}
\newcommand{\true}{true}
\newcommand{\tikzb}{\begin{tikzpicture}
    [every label/.style={black},place/.style={circle,fill=black,thick,inner sep=0.1cm,minimum size=2mm}]
}
\newcommand{\tikze}{\end{tikzpicture}}
\newcommand{\nod}[2]{\node at (#1,#2) [place]{}}
\newcommand{\nnod}[4]{\node (#3) at (#1,#2) [place,label=#4:#3]{}}
\newcommand{\nodd}[1]{node [place,#1] {}}
\newcommand{\noddd}[2]{node [fill=white,draw,double,rounded corners,#1] {#2}}
\newcommand{\capt}[2]{\draw[white] (#1) -- node[above,sloped,black] {#2} +(2,0)}
\title{The role of planarity in connectivity problems parameterized by treewidth\thanks{This work was supported by the ANR French project AGAPE (ANR-09-BLAN-0159) and the Languedoc-Roussillon Project ``Chercheur d'avenir'' KERNEL.}
}
\author{Julien Baste and  Ignasi Sau}
\authorrunning{Julien Baste and  Ignasi Sau}
\titlerunning{The role of planarity in connectivity problems parameterized by treewidth}
\institute{AlGCo project-team, CNRS, LIRMM, Montpellier, France.\\
\email{jbaste@ens-cachan.fr, ignasi.sau@lirmm.fr}}
\begin{document}

\maketitle
\setcounter{footnote}{0}

\vspace{-.35cm}
\begin{abstract} 
For some years it was believed that for ``connectivity'' problems such as \textsc{Hamiltonian Cycle}, algorithms running in time $2^{O(\tw)}\cdot n^{O(1)}$  --called \emph{single-exponential}--  existed only on planar and other sparse graph classes, where $\tw$ stands for the treewidth of the $n$-vertex input graph. This was recently disproved by Cygan \emph{et al}.~[FOCS 2011], Bodlaender \emph{et al}.~[ICALP 2013], and Fomin \emph{et al}.~[SODA 2014], who provided single-exponential algorithms on general graphs for essentially all connectivity problems that were known to be solvable in single-exponential time on sparse graphs. In this article we further investigate the role of planarity in connectivity problems parameterized by treewidth, and convey that several problems can indeed be distinguished according to their behavior on planar graphs. Known results from the literature imply that there exist problems, like \textsc{Cycle Packing}, that {\sl cannot} be solved in time $2^{o(\tw \log \tw)} \cdot n^{O(1)}$ on general graphs but that can be solved in time $2^{O(\tw)} \cdot n^{O(1)}$ when restricted to planar graphs. Our main contribution is to show that there exist problems that can be solved in time $2^{O(\tw \log \tw)} \cdot n^{O(1)}$ on general graphs but that {\sl cannot} be solved in time $2^{o(\tw \log \tw)} \cdot n^{O(1)}$ even when restricted to planar graphs. Furthermore, we prove that  \textsc{Planar Cycle Packing} and \textsc{Planar Disjoint Paths} cannot be solved in time $2^{o(\tw)} \cdot n^{O(1)}$.
The mentioned negative results hold unless the ETH fails. We feel that our results constitute a first step in a subject that can be further exploited.


\vspace{0.25cm} \textbf{Keywords:} parameterized complexity, treewidth, connectivity problems, single-exponential algorithms, planar graphs, dynamic programming.
\end{abstract}





\section{Introduction}
\label{sec:intro}
\paragraph{\textbf{\emph{Motivation and previous work.}}} Treewidth is a fundamental graph parameter that, loosely speaking, measures the resemblance of a graph to a tree. It was introduced by Robertson and Seymour in the early stages of their monumental Graph Minors project~\cite{RoSe86}, but its algorithmic importance originated mainly in Courcelle's theorem~\cite{Cou90}, stating that any graph problem that can be expressed in CMSO logic can be solved in time $f(\tw)\cdot n$ on graphs with $n$ vertices and treewidth $\tw$. Nevertheless, the function $f(\tw)$ given by Courcelle's theorem is unavoidably huge~\cite{FG06}, so from an algorithmic point of view it is crucial to identify problems for which $f(\tw)$ grows {\sl moderately} fast.

Many problems can be solved in time $2^{O(\tw \log \tw)} \cdot n^{O(1)}$ when the $n$-vertex input (general) graph comes equipped with a tree-decomposition of width $\tw$. Intuitively, this is the case of problems that can be solved via dynamic programming on a tree-decomposition by enumerating all {\sl partitions} or {\sl packings} of the vertices in the bags of the tree-decomposition, which are $\tw^{O(\tw)}=2^{O(\tw \log \tw)}$ many. In this article we only consider this type of problems and, more precisely, we are interested in which of these problems can be solved in time $2^{O(\tw)} \cdot n^{O(1)}$; such a running time is called \emph{single-exponential}. This topic has been object of extensive study during the last decade. Let us briefly overview the main results on this line of research.

It is well known that problems that have \emph{locally checkable certificates}\footnote{That is, certificates consisting of a constant number of bits per vertex that can be checked by a cardinality check and by iteratively looking at the neighborhoods of the input graph.}, like \textsc{Vertex Cover} or \textsc{Dominating Set}, can be solved in single-exponential time on general graphs. Intuitively, for this problems it is enough to enumerate {\sl subsets} of the bags of a tree-decomposition (rather than partitions or packings), which are $2^{O(\tw)}$ many. A natural class of problems that do {\sl not} have locally checkable certificates is the class of so-called \emph{connectivity problems}, which contains for example  \textsc{Hamiltonian Cycle}, \textsc{Steiner Tree}, or \textsc{Connected Vertex Cover}. These problems have the property that the solutions should satisfy a {\sl connectivity} requirement (see~\cite{CNP11,BCKN12,RST11} for more details), and using classical dynamic programming techniques it seems that for solving such a problem it is necessary to enumerate partitions or packings of the bags of a tree-decomposition.

A series of articles provided single-exponential algorithms for connectivity problems when the input graphs are restricted to be sparse, namely planar~\cite{DPBF10}, of bounded genus~\cite{RST11,DFT06}, or excluding a fixed graph as a minor~\cite{RST12,DFT12}. The common key idea of these works is to use special types of branch-decompositions (which are objects similar to tree-decompositions) with nice combinatorial properties, which strongly rely on the fact that the input graphs are sparse.

Until very recently, it was a common belief that all problems solvable in single-exponential time of general graphs should have locally checkable certificates, specially after Lokshtanov \emph{et al}.~\cite{LMS11a} proved that one connectivity problem, namely \textsc{Disjoint Paths}, cannot be solved in time $2^{o(\tw \log \tw)} \cdot n^{O(1)}$ on general graphs unless the Exponential Time Hypothesis (ETH) fails\footnote{The ETH states that {\sc 3-SAT} cannot be solved in subexponential time.}. This credence was disproved by Cygan \emph{et al}.~\cite{CNP11}, who provided single-exponential {\sl randomized} algorithms on general graphs for several connectivity problems, like \textsc{Longest Path},  \textsc{Feedback Vertex Set}, or \textsc{Connected Vertex Cover}. More recently, Bodlaender \emph{et al}.~\cite{BCKN12} presented single-exponential {\sl deterministic} algorithms for basically the same connectivity problems, and an alternative proof based on matroids was given by Fomin \emph{et al}.~\cite{FLS13}. These results have been considered a breakthrough, and in particular they imply that essentially {\sl all} connectivity problems that were known to be solvable in single-exponential time on sparse graph classes~\cite{DFT06,RST11,RST12,DFT12,DPBF10} are also solvable in single-exponential time on general graphs~\cite{CNP11,BCKN12}.

\paragraph{\textbf{\emph{Our main results.}}} In view of the above discussion, a natural conclusion is that sparsity may not be particularly helpful or relevant for obtaining single-exponential algorithms. However, in this article we convey that sparsity (in particular, planarity) {\sl does} play a role in connectivity problems parameterized by treewidth. To this end, among the problems that can be solved in time $2^{O(\tw \log \tw)} \cdot n^{O(1)}$ on general graphs, we distinguish the following three disjoint types:
\begin{itemize}
\item[$\bullet$] \textbf{Type 1}: Problems that can be solved in time $2^{O(\tw)} \cdot n^{O(1)}$ on general graphs.
\item[$\bullet$] \textbf{Type 2:} Problems that {\sl cannot} be solved in time $2^{o(\tw \log \tw)} \cdot n^{O(1)}$ on general graphs unless the ETH fails, but that can be solved in time $2^{O(\tw)} \cdot n^{O(1)}$ when restricted to planar graphs.
\item[$\bullet$] \textbf{Type 3:} Problems that {\sl cannot} be solved in time $2^{o(\tw \log \tw)} \cdot n^{O(1)}$ even when restricted to planar graphs, unless the ETH fails.
\end{itemize}

Problems that have locally checkable certificates are of Type~1. As discussed in Section~\ref{sec:type2}, known results imply that there exist problems of Type~2, such as \textsc{Cycle Packing}. Our main contribution is to show that there exist problems of Type~3, thus demonstrating that some connectivity problems can indeed be distinguished according to their behavior on planar graphs. More precisely, we prove the following results:
\begin{itemize}
\item[$\bullet$] In Section~\ref{sec:type2} we provide some examples of problems of Type 2.
Furthermore, we prove that \textsc{Planar Cycle Packing} cannot be solved in time $2^{o(\tw)} \cdot n^{O(1)}$ unless the ETH fails, and therefore the running time $2^{O(\tw)} \cdot n^{O(1)}$ is tight.
\item[$\bullet$] In Section~\ref{sec:type3} we provide an example of problem of Type~3: \textsc{Monochromatic Disjoint Paths}, which is a variant of the \textsc{Disjoint Paths} problem on a vertex-colored graph with additional restrictions on the allowed colors for each path. To the best of our knowledge, problems of this type had not been identified before.
\end{itemize}

In order to obtain our results, for the upper bounds we strongly follow the algorithmic techniques based on \emph{Catalan structures} used in~\cite{DFT06,RST11,RST12,DFT12,DPBF10}, and for some of the lower bounds we use the framework introduced in~\cite{LMS11a}, and that has been also used in~\cite{CNP11}.


\paragraph{\textbf{\emph{Additional results and further research.}}} We feel that our results about the role of planarity in connectivity problems parameterized by treewidth are just a first step in a subject that can be much exploited, and we think that the following avenues are particularly interesting:

\begin{itemize}
\item[$\bullet$] It is known that \textsc{Disjoint Paths} can be solved in time $2^{O(\tw \log \tw)} \cdot n^{O(1)}$ on general graphs~\cite{Sch94}, and that this bound is asymptotically tight under the ETH~\cite{LMS11a}. The fact whether \textsc{Disjoint Paths} belongs to Type~2 or Type~3 (or maybe even to some other type in between) remains an important open problem that we have been unable to solve. Towards a possible answer to this question, we prove in Section~\ref{sec:pdp} that \textsc{Planar Disjoint Paths} cannot be solved in time $2^{o(\tw)} \cdot n^{O(1)}$ unless the ETH fails.


\item[$\bullet$] Lokshtanov \emph{et al}.~\cite{LMS11b} have proved that for a number of problems such as \textsc{Dominating Set} or \textsc{$q$-Coloring}, the best known constant $c$ in algorithms of the form $c^{ \tw}\cdot n^{O(1)}$ on general graphs is best possible unless the Strong ETH fails. Is it possible to provide better constants for these problems on planar graphs? The existence of such algorithms would permit to further refine the problems belonging to Type~1.

\item[$\bullet$] Are there NP-hard problems solvable in time $2^{o(\tw)}\cdot n^{O(1)}$?

\item[$\bullet$] Finally, it would be interesting to obtain similar results for problems parameterized by pathwidth, and to extend our algorithms to more general classes of sparse graphs.
\end{itemize}

\section{Notation and preliminaries}
\label{sec:prelim}
\paragraph{\textbf{\emph{Graphs}}.}  We use standard graph-theoretic notation, and  the reader is referred to~\cite{Die05}  for any undefined term.  All the graphs we consider are undirected and contain neither loops nor multiple edges. We denote by $V(G)$ the set of vertices of a graph $G$ and by $E(G)$ its set of edges.
A \emph{subgraph} $H = (V_H,E_H)$ of a graph $G=(V,E)$ is a graph such that $V_H \subseteq V$ and $E_H \subseteq E \cap (V_H \times V_H)$.
The \emph{degree} of a vertex $v$ in a graph $G$, denoted by $\degs{G}{v}$, is the number of edges of $G$ containing $v$. The \emph{grid} $m*k$ is the graph $Gr_{m,k} = (\{a_{i,j}| i \in [m], j \in [k]\}, \{(a_{i,j},a_{i+1,j}) | i \in [m-1], j \in [k] \} \cup \{(a_{i,j},a_{i,j+1}) | i \in [m], j \in [k-1] \} )$. When $m=k$ we just speak about the \emph{grid of size $k$}. We say that there is a \emph{path} $s \dots t$ in a graph $G$ if there exist $m \in \N$ and $x_0, \dots, x_m$ in $V(G)$ such that $x_0 = s$, $x_m = t$, and for all $i \in [m]$, $(x_{i-1},x_i) \in E(G)$. 

Throughout the paper, when the problem under consideration is clear, we let $n$ denote the number of vertices of the input graph, $\tw$ its treewidth, and $\pw$ its pathwidth, to be defined below. We use the notation $[k]$ for the set of integers $\{1, \dots, k\}$. In the set $[k]\times [k]$, a \emph{row} is a set $\{i\} \times [k]$ and a \emph{column} is a set $[k] \times \{i\}$ for some $i\in [k]$. If $\mathbf{P}$ is a problem defined on graphs, we denote by {\sc Planar $\mathbf{P}$} the restriction of $\mathbf{P}$ to planar input graphs.


\paragraph{\textbf{\emph{Treewidth and pathwidth}}.}
A \emph{tree-decomposition} of width $w$ of a graph $G=(V,E)$ is a pair $(T,\sigma)$, where $T$ is a tree and $\sigma = \{ B_t | B_t \subseteq V, t \in V(T) \}$ such that:
\begin{itemize}
\item[$\bullet$] $\bigcup_{t \in V(T)} B_t = V$;
\item[$\bullet$] For every edge $\{u,v\} \in E$ there is a $t \in V(T)$ such that $\{u, v\} \subseteq B_t$;
\item[$\bullet$] $B_i \cap B_k \subseteq B_j$ for all $\{i,j,k\} \subseteq V(T)$ such that $j$ lies on the path $i \dots k$ in $T$;
\item[$\bullet$] $\max_{i \in V(T)} |B_t| = w +1$.
\end{itemize}

The sets $B_t$ are called \emph{bags}. The \emph{treewidth} of $G$, denoted by $\tw(G)$, is the smallest integer $w$ such that there is a tree-decomposition of $G$ of width $w$. An \emph{optimal tree-decomposition} is a tree-decomposition of width $\tw(G)$. A \emph{path-decomposition} of a graph $G = (V,E)$ is a tree-decomposition $(T,\sigma)$ such that $T$ is a path.
The \emph{pathwidth} of $G$, denoted by $\pw(G)$, is the smallest integer $w$ such that there is a path-decomposition of $G$ of width $w$. Clearly, for any graph $G$, we have $\tw(G) \leq \pw (G)$.

\paragraph{\textbf{\emph{Branchwidth}}.}
A \emph{branch-decomposition} $(T, \sigma)$ of a graph $G=(V,E)$ consists of an unrooted ternary tree $T$ and a bijection $\sigma : L \rightarrow E$ from the set $L$ of leaves of $T$ to the edge set of $G$. We define for every edge $e$ of $T$ the \emph{middle set} $\mids (e) \subseteq V(G)$ as follows: Let $T_1$ and $T_2$ be the two connected components of $T \bs \{e\}$. Then let $G_i$ be the graph induced by the edge set $\{\sigma (f) : f \in L \cap V(T_i)\}$ for $i \in \{1,2\}$. The \emph{middle set} of $e$ is the intersection of the vertex sets of $G_1$ and $G_2$, i.e., $\mids (e) := V(G_1) \cap V(G_2)$. When we consider $T$ as rooted, we let $G_e$ be
 the graph $G_i$ such that $T_i$ does not contain the root of $T$.
The \emph{width} of $(T,\sigma)$ is the maximum order of the middle sets over all edges of $T$, i.e., $w(T,\sigma) := \max\{|\mids (e) | | e \in T\}$.
The \emph{branchwidth} of $G$, denoted by $\bw(G)$, is the minimum width over all branch decompositions of $G$.
An \emph{optimal branch decomposition} of $G$ is a branch decomposition $(T,\sigma)$ of width $\bw(G)$. By~\cite{Rob91}, the branchwidth of a graph $G$ with at least 3 edges is related to its treewidth by $\bw(G) -1 \leq \tw(G) \leq \lfloor\frac{3}{2} \bw(G) \rfloor -1$.

\paragraph{\textbf{\emph{Planar graphs}}.}
Let $\Sigma$ be the sphere $\{(x,y,z) \in \mathbb{R}^3 : x^2 + y^2 + z^2 = 1\}$. By a \emph{$\Sigma$-plane} graph $G$ we mean a planar graph $G$ with its vertex set $V(G)$, edge set $E(G)$, and face set $F(G)$ drawn without edge crossings in $\Sigma$. An $O$-arc is a subset of $\Sigma$ homeomorphic to a circle. An $O$-arc in $\Sigma$ is called a \emph{noose} of a $\Sigma$-plane graph $G$ if it meets $G$ only in vertices and intersects with every face at most once. Each noose $O$ bounds two open discs $\Delta_1, \Delta_2$ in $\Sigma$, i.e., $\Delta_1 \cap \Delta_2 = \es$ and $\Delta_1 \cup \Delta_2 \cup O = \Sigma$.

For a $\Sigma$-plane graph $G$, we define a \emph{sphere cut decomposition} $(T, \sigma, \pi)$ of $G$, or \emph{sc-decomposition} for short, as a branch-decomposition such that for every edge $e$ of $T$ there exists a noose $O_e$ bounding the two open discs $\Delta_1$ and $\Delta_2$ such that $G_i \subseteq \Delta_i \cup O_e$, $1 \leq i \leq 2$. Thus $O_e$ meets $G$ only in $\mids (e)$ and its length is $|\mids (e)|$. It is known that any planar graph $G$ has a sc-decomposition of width $\bw(G)$ that can be computed in polynomial time~\cite{DPBF10,SeTh94}.

\paragraph{\textbf{\emph{Non-crossing partitions and matchings}}.}
A \emph{partition} $P$ of a set $S$ is a set of subsets of $S$ such that $\bigcup_{s \in P} s = S$ and for all distinct $s_1, s_2 \in P$, $s_1 \cap s_2 = \es$.
A partition $P$ is called \emph{non-crossing partition} if for each $s_1, s_2 \in P$, for each $a, b \in s_1$ and $c,d \in s_2$ with $a < b$ and $c < d$ then one of the following situations occurs:
$a < b < c < d$, $a < c < d < b$, $c < d < a < b$, or $c < a < b < d$.
Kreweras showed in \cite{Kre72} that the number of non-crossing partitions on $[k]$ for $k \in \N$ is at most $4^k$.


A \emph{matching} $M$ is a set of pairs of elements of a set, which we also call edges, such that for each $e,e' \in M$, $e \not = e'$, $e \cap e' = \es$.
For a matching $M$ in a graph $G$, we denote by $\set{M}$ the set of all vertices that belong to an edge of $M$.
We say that two matchings $M$ and $M'$ are \emph{disjoint} if $\set{M} \cap \set{M'} = \es$. A matching $M$ on $V=\{v_1, \dots, v_n\}$, for some $n \in \N^*$, is called \emph{non-crossing matching} if for each $\{v_a,v_b\}, \{v_c,v_d\} \in M$, with $a < b$ and $c < d$, then one of the following situations occurs:
$a < b < c < d$, $a < c < d < b$, $c < d < a < b$, or $c < a < b < d$. Kreweras showed in~\cite{Kre72} that the number of non-crossing matchings on $[k]$ for $k \in \N$ is at most $2^k$.

\paragraph{\textbf{\emph{Tight problems of Type~1}}.} It is usually believed that NP-hard problems parameterized by $\tw$ cannot be solved in time $2^{o(\tw)} \cdot n^{O(1)}$ under some reasonable complexity assumption. This has been proved in~\cite{IPZ01} for problems on {\sl general} graphs such as \textsc{$q$-Colorability, Independent Set,} or {\sc Vertex Cover}, or in~\cite{VSX05} for \textsc{Planar Hamiltonian Cycle}, all these results assuming the ETH.
Nevertheless, such a result requires an ad-hoc proof for each problem. For instance, to the best of our knowledge, such lower bounds are not known for {\sc Cycle Packing} when the input graph is restricted to be {\sl planar}.  In order to deal with {\sc Planar Cycle Packing} in Section~\ref{sec:type2}, we need to introduce the \textsc{3-Colorability} problem, which is easily seen to be a problem of Type~1.


\probl
{{\sc 3-Colorability}}
{An $n$-vertex graph $G=(V,E)$.}
{Is there a coloring $c : V \to \{1,2,3\}$ s.t. for all $\{x,y\} \in E$, $c(x) \not = c(y)$?}

\vspace{-.25cm}

It is known that \textsc{Planar 3-Colorability} cannot be solved in time $2^{o(\sqrt{n})} \cdot n^{O(1)}$ even when the input graph has maximum degree at most 4, unless the ETH fails. Indeed, the NP-completeness reduction for \textsc{Planar 3-Colorability} given in~\cite{GJS76} starts with a {\sc 3-SAT} formula on $t$ variables, and creates a planar graph of maximum degree at most 4 in which the number of vertices is $O(t^2)$, yielding directly the desired result. 
In Theorem~\ref{th:lb3c} below, whose proof can be found in Appendix~\ref{sec:3coloring}, we provide an alternative proof of this result with a slightly worse degree bound.

\begin{theorem}
\label{th:lb3c}
\textsc{Planar 3-Colorability} cannot be solved in time $2^{o(\sqrt{n})} \cdot n^{O(1)}$  unless the ETH fails, even when the input graph has maximum degree at most 5.
\end{theorem}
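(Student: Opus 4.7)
The plan is to reduce from \textsc{3-SAT} in such a way that an instance with $t$ variables is transformed into an equivalent \textsc{Planar 3-Colorability} instance on $N = O(t^2)$ vertices whose maximum degree is at most $5$. Under the ETH (with the sparsification lemma), \textsc{3-SAT} has no $2^{o(t)} \cdot t^{O(1)}$ algorithm; a hypothetical $2^{o(\sqrt{N})} \cdot N^{O(1)}$ algorithm for \textsc{Planar 3-Colorability} would yield a $2^{o(\sqrt{t^2})} \cdot t^{O(1)} = 2^{o(t)} \cdot t^{O(1)}$ algorithm for \textsc{3-SAT}, contradicting the ETH.

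First I would recall the classical (non-planar) reduction from \textsc{3-SAT} to \textsc{3-Colorability}: fix a ``palette triangle'' on three distinguished vertices $T,F,B$ (forcing three distinct colors corresponding to \emph{true}, \emph{false}, \emph{base}); for each variable $x$ add a literal edge $\{x,\bar x\}$ with both endpoints adjacent to $B$, so $x$ and $\bar x$ receive colors from $\{T,F\}$; and for each clause add a standard OR-gadget on three literals that is $3$-colorable exactly when at least one literal is colored $T$. This construction produces a graph $H_\phi$ on $O(t)$ vertices; by replicating each literal through a small chain of ``copy gadgets'' that enforce color-equality, we may additionally assume that every vertex of $H_\phi$ has bounded degree (a constant independent of $t$).

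Next I would planarize $H_\phi$. Draw $H_\phi$ in the plane with $O(t^2)$ edge crossings (any drawing achieves this since $H_\phi$ has $O(t)$ edges), and replace each crossing by a \emph{planar crossover gadget}: a small fixed planar graph with four external vertices $a,b,c,d$ such that, in any proper $3$-coloring, the pair $(a,b)$ is forced to receive the same pair of colors as $(c,d)$ in a way that simulates the two crossing edges $\{a,b\}$ and $\{c,d\}$, and conversely any pair of colorings of $\{a,b,c,d\}$ consistent with the two crossings extends to the gadget. Using a gadget with a constant number of vertices and with all internal vertices of degree at most $5$ (allowing one extra unit of degree compared to the classical degree-$4$ gadget of~\cite{GJS76} simplifies the construction) produces a planar graph $G_\phi$ on $N = O(t^2)$ vertices of maximum degree at most~$5$. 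Correctness of each gadget has to be checked case-by-case, but this is finite and mechanical.

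The main obstacle is precisely the design and verification of the crossover gadget within the degree bound: one must simultaneously guarantee that (i) every proper $3$-coloring of the four external vertices that would be allowed if the two edges $\{a,b\}$ and $\{c,d\}$ actually crossed extends to a proper $3$-coloring of the gadget, (ii) no other combination of external colors extends, and (iii) no vertex, internal or external (accounting for all incident gadgets and copies of literals), has degree more than $5$. Once this gadget is fixed, the size and degree bounds on $G_\phi$ are immediate, $\phi$ is satisfiable if and only if $G_\phi$ is $3$-colorable, and the ETH-based lower bound follows from the parameter chain $t \mapsto N = O(t^2)$ described above.
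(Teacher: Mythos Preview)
Your proposal is correct and follows essentially the classical route of Garey, Johnson, and Stockmeyer~\cite{GJS76}, which the paper itself cites (just before the theorem) as the known argument, and which in fact yields the stronger degree bound $\leq 4$. The paper's appendix proof takes a genuinely different path: it reduces directly from general \textsc{3-Colorability} rather than from \textsc{3-SAT}. Given an $n$-vertex graph $G$, it lays out an $n\times n$ grid of cross-color (CC) gadgets indexed by pairs $(i,j)$, threads a ``color wire'' for each vertex $v_i$ through column $i$ using equality (C) gadgets, and inserts a single extra edge inside the $(i,j)$-cell exactly when $\{v_i,v_j\}\in E(G)$. Planarity and the $O(n^2)$ size are then immediate from the grid layout, with no need to produce a drawing of an arbitrary graph and bound its crossings. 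The price is that the paper's construction initially creates vertices of degree up to~$7$, which are then locally expanded via C-gadgets to bring the maximum degree down to~$5$; your route, carried out carefully as in~\cite{GJS76}, recovers the sharper bound~$4$. Either way the quadratic blow-up gives the same $2^{o(\sqrt{n})}$ lower bound under the ETH.
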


\begin{corollary}
\textsc{Planar 3-Colorability} cannot be solved in time $2^{o(\tw)} \cdot n^{O(1)}$  unless the ETH fails,  even if the input graph has maximum degree at most 5.
\end{corollary}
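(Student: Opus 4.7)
The plan is a direct reduction from Theorem~\ref{th:lb3c} using the well-known fact that planar graphs have treewidth sublinear in the number of vertices. More precisely, any planar graph $G$ on $n$ vertices satisfies $\tw(G) = O(\sqrt{n})$; this follows, for instance, from the planar separator theorem of Lipton and Tarjan, or from the fact that an $n$-vertex planar graph excludes the $\Omega(\sqrt{n})$-grid as a minor combined with the grid-exclusion theorem for planar graphs. A tree-decomposition of width $O(\sqrt{n})$ can moreover be computed in polynomial time.

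Armed with this, I would argue by contraposition. Suppose there exists an algorithm $\mathcal{A}$ solving \textsc{Planar 3-Colorability} on instances of maximum degree at most $5$ in time $2^{o(\tw)} \cdot n^{O(1)}$. Given an arbitrary planar input $G$ of maximum degree at most $5$, I would first compute in polynomial time a tree-decomposition of $G$ of width $w = O(\sqrt{n})$, and then run $\mathcal{A}$ on $G$. Since $w = O(\sqrt{n})$, the overall running time becomes
\[
2^{o(w)} \cdot n^{O(1)} \;=\; 2^{o(\sqrt{n})} \cdot n^{O(1)},
\]
which would solve \textsc{Planar 3-Colorability} on graphs of maximum degree at most $5$ in time $2^{o(\sqrt{n})} \cdot n^{O(1)}$. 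By Theorem~\ref{th:lb3c}, this would refute the ETH.

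There is essentially no obstacle here beyond citing the sublinear treewidth bound for planar graphs and the fact that such a decomposition is constructible in polynomial time, so the proof reduces to a one-paragraph substitution argument, and the corollary follows immediately.
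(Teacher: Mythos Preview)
Your proposal is correct and follows essentially the same route as the paper: the paper's proof is a one-line argument invoking the bound $\tw(G)=O(\sqrt{n})$ for planar $G$ (citing~\cite{Fom06}) and then applying Theorem~\ref{th:lb3c} by contraposition, exactly as you do. The extra step of explicitly computing a width-$O(\sqrt{n})$ tree-decomposition is not needed for the paper's formulation, but it is harmless.
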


\begin{proof}
As a planar graph $G$ on $n$ vertices satisfies $\tw(G) = O(\sqrt{n})$ \cite{Fom06},  an algorithm in time $2^{o(\tw)}\cdot n^{O(1)}$ for {\sc Planar 3-Colorability} implies that there is an algorithm in time $2^{o(\sqrt{n})} \cdot n^{O(1)}$, which is impossible by Theorem~\ref{th:lb3c} unless the ETH fails.
\end{proof}


\section{Problems of Type 2}
\label{sec:type2}
In this section we deal with problems of Type~2. Let us start with the following problem.



\probls
{Cycle Packing}
{An $n$-vertex graph $G=(V,E)$ and an integer $\ell_0$.}
{Does $G$ contain $\ell_0$ pairwise vertex-disjoint cycles?}
{The treewidth $\tw$ of $G$.}

It is proved in~\cite{CNP11} that \textsc{Cycle Packing} {\sl cannot} be solved in time $2^{o(\tw \log \tw)} \cdot n^{O(1)}$ on general graphs unless the ETH fails. On the other hand, a dynamic programming algorithm for \textsc{Planar Cycle Packing} running in time $2^{O(\tw)} \cdot n^{O(1)}$ can be found in~\cite{KLL02}. Therefore, it follows that \textsc{Cycle Packing} is of Type~2. In Lemma~\ref{lem:CyclePacking} below we provide an alternative algorithm for \textsc{Planar Cycle Packing} running in time $2^{O(\tw)} \cdot n^{O(1)}$, which is a direct application of the techniques based on \emph{Catalan structures} introduced in~\cite{DPBF10}. We include its proof here for completeness, as it yields slightly better constants than the algorithm of~\cite{KLL02}, and because we will use similar terminology in the more involved algorithm of Lemma~\ref{lem:algoMonochromDisjointPaths} in Section~\ref{sec:type3}.

\begin{lemma}\label{lem:CyclePacking}
\textsc{Planar Cycle Packing} can be solved in time $2^{O(\tw)} \cdot n^{O(1)}$.
\end{lemma}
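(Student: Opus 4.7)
The plan is to follow the Catalan-structure dynamic programming paradigm of Dorn, Penninkx, Bodlaender, and Fomin \cite{DPBF10}, exploiting the crucial feature of a sphere cut decomposition: every middle set lies on a noose, hence inherits a cyclic order from the planar embedding. First I would preprocess: using the inequality $\bw(G)-1 \le \tw(G) \le \lfloor \frac{3}{2}\bw(G)\rfloor - 1$ together with the fact (recalled in Section~\ref{sec:prelim}) that a planar graph admits an sc-decomposition $(T,\sigma,\pi)$ of width $\bw(G)$ computable in polynomial time, I obtain an sc-decomposition of $G$ of width $w = O(\tw)$. I then root $T$ arbitrarily and process it bottom-up.

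The DP state associated with an edge $e$ of $T$ encodes a partial packing in $G_e$, which consists of some fully completed cycles contained in $G_e$ plus some vertex-disjoint paths whose endpoints lie in $\mids(e)$ (to be closed into cycles outside $G_e$). Hence every vertex $v\in \mids(e)$ has one of three statuses: \emph{unused} (no incident selected edge of $G_e$), \emph{interior} (exactly two), or \emph{endpoint} (exactly one). The endpoint vertices are paired according to which partial path they belong to. Planarity yields the key structural fact: since the noose $O_e$ bounds a disc $\Delta$ containing $G_e$, two partial paths inside $G_e$ cannot have endpoints interleaving cyclically on $O_e$, so the pairing is a \emph{non-crossing matching} on the cyclically ordered set $\mids(e)$. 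Therefore each state is a triple $(\tau,\mu,c)$ where $\tau:\mids(e)\to\{\text{unused},\text{interior},\text{endpoint}\}$, $\mu$ is a non-crossing matching of $\tau^{-1}(\text{endpoint})$, and $c\in\{0,1,\dots,n\}$ counts the cycles already completed in $G_e$. By Kreweras' bound (recalled in Section~\ref{sec:prelim}), the number of non-crossing matchings on an ordered $k$-set is at most $2^k$, so the total number of states per edge is at most $3^w\cdot 2^w\cdot (n+1) = 2^{O(\tw)}\cdot n$.

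The transitions are straightforward. At a leaf mapped to an edge $\{u,v\}$ of $G$, one records the two possible choices (include or discard the edge) and the resulting label on $\{u,v\}$. At an internal node with two children, one merges the child states: the $\tau$ labels at a shared vertex combine additively (two endpoint contributions becoming \emph{interior}, one endpoint plus unused remaining \emph{endpoint}, anything with degree exceeding two being rejected), and the two non-crossing matchings on the complementary arcs are concatenated and reduced by joining paths that share an endpoint; whenever this concatenation closes a path into a cycle internal to the merged graph, one increments $c$, and whenever it would create a crossing on the parent noose, one rejects. The answer is YES iff some state at the root edge has $\mu=\emptyset$, $\tau$ nowhere \emph{endpoint}, and $c\ge \ell_0$. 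Each merge can be performed in time polynomial in $w$ per pair of states, so the overall running time is $2^{O(\tw)}\cdot n^{O(1)}$.

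The main technical point, and the only one requiring care, is the invariant that the non-crossing property of $\mu$ is preserved under merges; this follows because the noose $O_e$ of the parent is obtained by gluing the two child nooses along their common arc inside $\Delta_1\cap\Delta_2$, so a concatenation of two non-crossing matchings on complementary arcs that respects the shared boundary is itself non-crossing. This is exactly the planarity argument of \cite{DPBF10} applied to our three-valued labeling, and once it is in hand the bookkeeping above yields the claimed running time.
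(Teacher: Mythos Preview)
Your proposal is correct and follows essentially the same approach as the paper: both compute an sc-decomposition of width $O(\tw)$, record for each middle set a three-way labeling (your $\tau$ is the paper's partition into $X$, $\set{M}$, and the rest) together with a non-crossing matching of path endpoints and a count of completed cycles, and bound the number of states by $3^{|\mids(e)|}\cdot 2^{|\mids(e)|}\cdot O(n)$ via Kreweras' bound. Your description of the merge and of why the non-crossing property is preserved is in fact slightly more explicit than the paper's, but the argument is the same.
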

\begin{proof} We prove the lemma for $\bw$, but as  $\tw \leq \lfloor\frac{3}{2} \bw \rfloor -1$, it will imply the same asymptotic upper bound for $\tw$. Let $G$ be a graph, $X \subseteq V(G)$, and $M$ a  matching on  $V(G)\bs X$.
Intuitively, $M$ represents the endpoints of the paths we are building and $X$ is the set of vertices that are already inside a path but they are not an endpoint of any path.
We define $G[(X,M,\ell)] = (\set{M},M)$.
We say that $G[(X_1, M_1,\ell_1), (X_2,M_2,\ell_2)]$ is \emph{defined} if $X_1 \cap (X_2 \cup \set{M_2}) = X_2 \cap (X_1 \cup \set{M_1}) = \es$ and we define $G[(X_1, M_1,\ell_1), (X_2,M_2,\ell_2)] = G[(X_1,M_1,\ell_1)] \cup G[(X_2, M_2,\ell_2)]$. Otherwise, we say that $G[(X_1, M_1,\ell_1), (X_2,M_2, \ell_2)]$ is \emph{undefined}.
We say that $cp(G,X,M) \geq \ell$ if $G$ contains paths joining each pair of vertices given by  $M$ and $\ell$ cycles, all pairwise vertex-disjoint.

We now consider $G=(V,E)$ to be our $\Sigma$-plane input graph and $\ell_0$ our integer. Let $(T,\mu,\pi)$ be a sc-decomposition of $G$ of width $\bw$. As in \cite{DPBF10}, we root $T$ by arbitrarily choosing an edge $e$ and we subdivide it by inserting a new node $s$. Let $e'$ and $e''$ be the new edges and set $\mids(e') = \mids(e'')=\mids(e)$. We create a new node root $r$, we connect it to $s$ by an edge $e_r$, and set $\mids(e_r) = \emptyset$. The root $e_r$ is not considered as a leaf.

Let $e \in E(T)$ and $\RR_e = \{(X,M,\ell) | X \subseteq \mids (e)$, $M$ is a  matching of a subset of $\mids (e) \bs X$, and $cp(G_e,X,M) \geq \ell\}$. We observe that there exist $\ell_0$ pairwise vertex-disjoint cycles in $G$ if and only if $(\emptyset, \emptyset, \ell_0) \in \RR_{e_r}$.
We should now compute $\RR_{e_r}$. If $e$ is a leaf then $G_e = (\{x,y\}, \{(x,y)\})$ and $\RR_e = \{(\es,\es,0), (\es,\{(x,y)\},0)\}$.
Otherwise, let $e_1$ and $e_2$ be the two children of $e$ in $E(T)$. $\RR_e$ is the set of all triples $(X,M,\ell)$ such that there exist $(S_1,S_2) = ((X_1,M_1,\ell_1),(X_2,M_2,\ell_2)) \in \RR_{e_1} \times \RR_{e_2}$ such that
$M \subseteq ((\set{M_1}\cup \set{M_2})\cap (\mids (e)\bs X))^2$, $G[S_1,S_2]$ is defined, all vertices in $\mids (e)$ of degree at least two in $G[S_1,S_2]$ are in X, and we can find in $G[S_1,S_2]$ $\ell_3$ cycles and a path $x \dots y$ for each $(x,y) \in M$
such that $\min(\ell_1 + \ell_2 + \ell_3, \ell_0) \geq \ell$.

Note that $G[S_1,S_2]$ is a minor of $G$ so $G[S_1,S_2]$ is also planar. As we have considered a sc-decomposition and  all the paths we consider in $G[S_1,S_2]$ are pairwise vertex-disjoint, since each vertex has degree at most two, the maximum number of distinct  matchings $M$ is bounded by the number of non-crossing matchings on $|\mids(e)|$ elements, which is at most $2^{|\mids(e)|}$. As we have at most $3^{|\mids(e)|}$ choices for $X$ and $\set{M}$, it follows that for each $e \in E(T)$, $|\RR_e| \leq 6^{|\mids(e)|}\cdot \ell_0$.
As for each $e \in E(T)$ such that $e$ is not a leaf, we have to merge the tables of the two children $e_1$ and $e_2$ of $e$, this algorithm can check in time $O(36^{\bw}\cdot \ell_0^2 \cdot |V(G)|)$ whether $G$ contains at least $\ell_0$ vertex-disjoint cycles. We note that  the constant can probably be optimized, for example by using fast matrix multiplication (see for instance~\cite{Wil12}),  but this is outside of the scope of this paper.\end{proof}


\vspace{.25cm}

We now prove that the running time given by Lemma~\ref{lem:CyclePacking} is asymptotically tight.

\begin{theorem}
\label{th:lbcp}
\textsc{Planar Cycle Packing} cannot be solved in time $2^{o(\sqrt{n})} \cdot n^{O(1)}$ unless the ETH fails. Therefore, \textsc{Planar Cycle Packing} cannot be solved in time $2^{o(\tw)} \cdot n^{O(1)}$ unless the ETH fails.
\end{theorem}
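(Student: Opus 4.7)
The plan is to give a polynomial-time reduction from \textsc{Planar 3-Colorability} on planar graphs of maximum degree at most $5$ (which by Theorem~\ref{th:lb3c} requires time $2^{\Omega(\sqrt{n})}\cdot n^{O(1)}$ under the ETH) to \textsc{Planar Cycle Packing}, and to do so in a way that blows up the instance size by only a constant factor. Concretely, given an instance $G=(V,E)$ with $|V|=n$ and a fixed planar embedding, I will produce a planar graph $G'$ with $|V(G')|=O(n)$ and an integer $\ell_0=n$ such that $G$ is $3$-colorable if and only if $G'$ contains $\ell_0$ pairwise vertex-disjoint cycles. Since $\sqrt{|V(G')|}=O(\sqrt{n})$, a putative $2^{o(\sqrt{|V(G')|})}\cdot |V(G')|^{O(1)}$ algorithm for the target would contradict the lower bound on the source, yielding the first part of the statement.

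The construction is based on two families of gadgets. For every $v\in V$ I would create a \emph{vertex gadget} $W_v$ containing three candidate cycles $C_v^1, C_v^2, C_v^3$, one per color, so interwoven through a constant number of ``anchor'' vertices internal to $W_v$ that any two of them pairwise share at least one vertex; this forces at most one of $C_v^1, C_v^2, C_v^3$ to appear in any vertex-disjoint packing of $G'$. For every edge $e=\{u,v\}\in E$ I would build an \emph{edge gadget} consisting of three ``port'' vertices $p_{uv}^1, p_{uv}^2, p_{uv}^3$, where $p_{uv}^i$ lies simultaneously on $C_u^i$ and on $C_v^i$, so that the color choices at the endpoints of $e$ must differ. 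With $\ell_0=n$, a packing of size $\ell_0$ is forced to select exactly one cycle per vertex gadget, and the colors extracted from those choices form a proper $3$-coloring; conversely, any $3$-coloring $c$ of $G$ yields $n$ pairwise vertex-disjoint cycles by picking $C_v^{c(v)}$ for each $v$. Planarity is preserved by placing each $W_v$ at the position of $v$ in the given embedding and by routing the edge gadgets along the drawing of the corresponding edges; since the maximum degree is bounded by $5$, only a bounded number of ports must be arranged around each $W_v$ and the radial layout fits without crossings.

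The size bookkeeping is straightforward: each $W_v$ and each edge gadget contain $O(1)$ vertices, and since $G$ is planar we have $|E|=O(n)$, giving $|V(G')|=O(n)$. Combined with the reduction, this establishes the $2^{o(\sqrt{n})}$ lower bound. The treewidth statement then follows immediately, since planar graphs satisfy $\tw(G')=O(\sqrt{|V(G')|})$~\cite{Fom06} (exactly as in the Corollary after Theorem~\ref{th:lb3c}), so that a $2^{o(\tw)}\cdot n^{O(1)}$ algorithm for \textsc{Planar Cycle Packing} would entail a $2^{o(\sqrt{n})}\cdot n^{O(1)}$ algorithm, contradicting the first part. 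The main technical obstacle is the explicit design of the vertex gadget so that it is simultaneously (i) planar and compatible with the radial placement of up to $5$ incident edge gadgets, and (ii) free of \emph{spurious} cycles, i.e.\ cycles of $G'$ using vertices from several gadgets whose presence could inflate the size of the optimal packing beyond $n$ even in the absence of a valid coloring; this is the step where soundness of the reduction could easily fail and must be verified with care.
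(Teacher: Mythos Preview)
Your high-level plan coincides with the paper's: reduce from \textsc{Planar 3-Colorability} on graphs of maximum degree at most~5, keep the blow-up linear, and deduce the $\tw$-statement from $\tw=O(\sqrt{n})$ on planar graphs. However, what you have written is a plan, not a proof. You yourself identify the ``main technical obstacle'' (ruling out spurious cycles that would let the optimum exceed $\ell_0$ without encoding a proper coloring) and then leave it open. That obstacle is precisely the heart of the theorem, and in the paper's proof it occupies most of the work: the paper builds concrete \emph{SC$_i$}, \emph{expel}, \emph{double-expel}, \emph{path-crossing}, \emph{bifurcate}, and \emph{edge} gadgets, sets $\ell_0$ to the total number of ``internal'' cycles asked across all gadgets, and then proves a structural claim (their Claim~1) that in \emph{any} optimal packing every cycle is entirely contained in a single expel/double-expel/SC$_i$ gadget. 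Without an analogue of that claim, your soundness direction does not go through.

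There is also a concrete design issue in your sketch that makes the gap hard to close as stated. You want a single port vertex $p_{uv}^i$ to lie on both $C_u^i$ and $C_v^i$. Then $C_v^i$ must visit one such port per incident edge (up to five), so these ``vertex-gadget'' cycles actually thread through all neighboring edge gadgets. This immediately manufactures spurious cycles: at $p_{uv}^i$ you can splice an arc of $C_u^i$ with an arc of $C_v^i$ to obtain a new cycle that belongs to neither gadget, and nothing in your description prevents such hybrids from being pairwise disjoint and numerous. The paper circumvents exactly this by never letting the ``color'' cycles of different vertices touch; instead, colors are transported between gadgets via expel/double-expel gadgets whose own mandatory internal cycles consume the only alternatives, and the path-crossing gadgets enforce that any cycle trying to cut across gadgets kills at least as many internal cycles as it contributes. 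Until you provide an explicit planar gadget set together with a counting argument of that kind, the reduction is not established.
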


\begin{proof}
To prove this theorem, we reduce from \textsc{Planar 3-Colorability} where the input graph has maximum degree at most 5. Let $G=(V,E)$ be a planar graph with maximum degree at most 5 with $V = \{v_1, \dots, v_n\}$. We proceed to construct a planar graph $H$ together with a planar embedding of it, where we will ask for an appropriate number $\ell_0$ of vertex-disjoint cycles.

In this proof, we abuse notation and say that we \emph{ask} for $x$ cycles in a gadget to say that the number of cycles we are looking for in the {\sc Planar Cycle Packing} problem is increased by $x$.
We will ask for a certain number of cycles in each of the introduced gadgets, which by construction will lead to a set of cycles of maximum cardinality in $H$.

We start by introducing some gadgets.
For each $i \in [n]$, corresponding to the vertices $v_1, \dots, v_n$ of $G$, we add to $H$ the SC$_i$-gadget depicted in Fig. \ref{fig:CPSCG}. More precisely, $SC_i = (\{a_i,b_i,c_i, u_{i,0}, u_{i,1}, u_{i,2}, u_{i,3}\},\{(u_{i,0},u_{i,1}),(u_{i,0},u_{i,2}),(u_{i,0},u_{i,3}), (a_i,u_{i,1}),(a_i,u_{i,2}),$ $(b_i,u_{i,2}), (b_i,u_{i,3}), (c_i,u_{i,1}), (c_i,u_{i,3}) \})$. We ask for a cycle inside this gadget. This cycle imposes that at least one of the vertices $\{a_i, b_i, c_i\}$, named a \emph{selected vertex} of the SC$_i$-gadget, is used by the inner cycle and leaves the possibility that the two others are free.
The intended meaning of each SC$_i$-gadget is as follows. The three vertices $a_i$, $b_i$, and $c_i$ correspond to the three colors in the 3-coloring of $G$, namely $a$, $b$, and $c$. If for instance $a_i$ is a selected vertex for index $i$, it will imply that vertex $v_i$ can be colored with color $a$. Therefore, each SC$_i$-gadget defines the available colors for vertex $v_i$, which we call the \emph{color output} of vertex $v_i$.

\begin{figure}[t!]
  \centering
  \begin{tikzpicture}
    \draw
    (0,0) \nodd{label=left:$u_{i,1}$}
    |- +(3,- 2) node [place, label=right:$c_i$] {}
    -- +(2,- 1) \nodd{label=left:$u_{i,3}$}
    -- +(3,0) node [place, label=right:$b_i$] {}
    -- +(2,1) node [place] {}
    -- +(3,2) node [place, label=right:$a_i$] {}
    -| +(0,0)
    -- +(1, 0) node [place] {}
    -- +(2,1) \nodd{label=left:$u_{i,2}$}
    -- +(1,0) \nodd{label=right:$u_{i,0}$}
    -- +(2,- 1) ;
  \end{tikzpicture}

  \caption{The SC$_i$-gadget.}\vspace{-.25cm}
  \label{fig:CPSCG}
\end{figure}
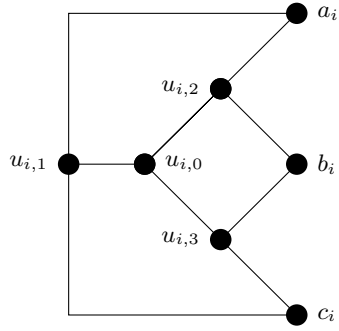

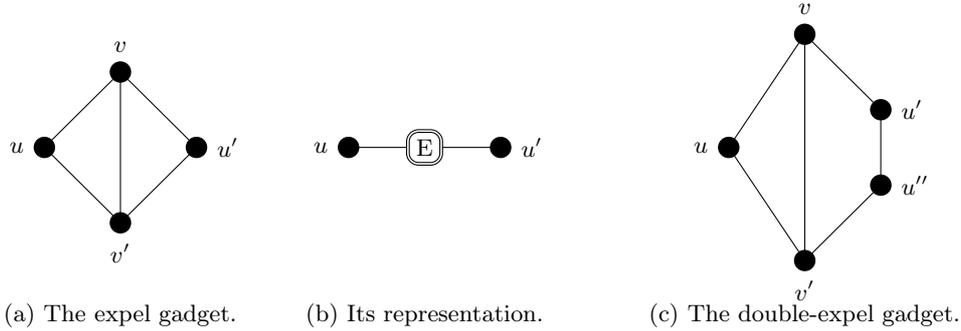
\begin{figure}[t!]
  \centering
  \begin{tikzpicture}
    \draw ( 0, -0.5) \nodd{label=above:$v$}
    -- +  ( 1,-1) \nodd{label=right:$u'$}
    -- +  ( 0,-2) \nodd{label=below:$v'$}
    -- +  (-1,-1) \nodd{label=left:$u$}
    -- +  ( 0, 0)
    -- +  ( 0,-2);
    \capt{-1,-4}{(a) The expel gadget.};
    \draw ( 3, -1.5) \nodd{label=left:$u$}
    -- +  ( 1, 0) \noddd{}{E}
    -- +  ( 2, 0) \nodd{label=right:$u'$};
    \capt{3,-4}{(b) Its representation.};
    \draw ( 9, 0) \nodd{label=above:$v$}
    -- +  ( 1,-1) \nodd{label=right:$u'$}
    -- +  ( 1,-2) \nodd{label=right:$u''$}
    -- +  ( 0,-3) \nodd{label=below:$v'$}
    -- +  (-1,-1.5) \nodd{label=left:$u$}
    -- +  ( 0, 0)
    -- +  ( 0,-3);
    \capt{8,-4}{(c) The double-expel gadget.};
  \end{tikzpicture}

  \caption{The expel gadget and the double-expel gadget.}
  \label{fig:CPE}
\end{figure}

\begin{figure}[t!]
  \centering
  \begin{tikzpicture}
    \draw ( 0,-3) \nodd{label=below:$pc_4$}
    --   ( 0, 3) \nodd{label=above:$pc_2$};
    \draw (-3, 0) \nodd{label=left:$pc_1$}
    --    ( 3, 0) \nodd{label=right:$pc_3$};
    \draw ( 0, 0) \nodd{label=70:$w_0$};
    \draw ( 0, 0)
    -- +  (-2, 0) \nodd{label=70:$w_{1,1}$}
    -- +  (-2, 1) \noddd{}{E}
    -- +  ( 0, 1) \nodd{label=70:$w_{2,2}$}
    -- +  ( 0, 2) \nodd{label=70:$w_{2,1}$}
    -- +  ( 1, 2) \noddd{}{E}
    -- +  ( 1, 0) \nodd{label=70:$w_{3,2}$}
    -- +  ( 2, 0) \nodd{label=70:$w_{3,1}$}
    -- +  ( 2,-1) \noddd{}{E}
    -- +  ( 0,-1) \nodd{label=70:$w_{4,2}$}
    -- +  ( 0,-2) \nodd{label=70:$w_{4,1}$}
    -- +  (-1,-2) \noddd{}{E}
    -- +  (-1, 0) \nodd{label=70:$w_{1,2}$};
    \capt{-1,-4}{The gadget.};

    \draw (8, 1) \nodd{label=above:$pc_2$} -- +(0,-2) \nodd{label=below:$pc_4$};
    \draw (7, 0) \nodd{label=left:$pc_1$} -- +(1,0) \noddd{}{PC}
    -- +(2,0) \nodd{label=right:$pc_3$};
    \capt{7,-4}{The representation.};
  \end{tikzpicture}

  \caption{Path-crossing gadget.}
  \label{fig:CPPC}
\end{figure}
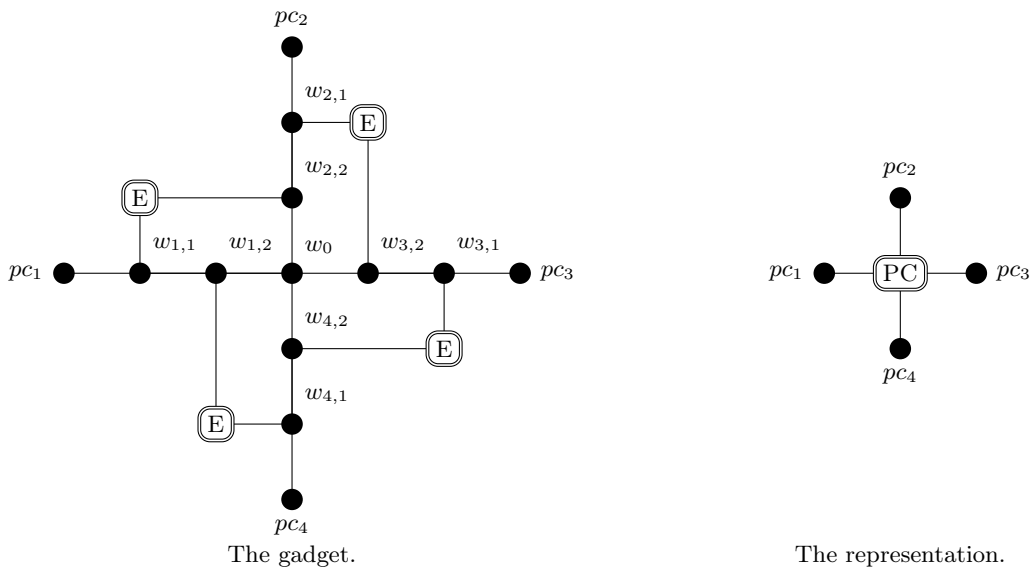

In order to construct a graph $H$ that defines a valid 3-coloring of $G$, we need to propagate the color output of $v_i$ as many times as the degree of $v_i$ in $G$. For this, we introduce a gadget called \emph{bifurcate} gadget.
Before proceeding to the description of the gadget, let us describe its intended functionality. The objective is, starting with the vertices $a_i$, $b_i$, and $c_i$ of the SC$_i$-gadget, to construct a set of triples $\{a_{i,k}, b_{i,k}, c_{i,k}\}$ for $1 \leq k \leq \degs{G}{v_i}$ such that in each triple there will be again at least one selected vertex, defined by the cycles that we will construct in the bifurcate gadgets. Note that in the SC$_i$-gadget the choice of a selected vertex in each triple $\{a_{i,k}, b_{i,k}, c_{i,k}\}$ naturally defines a color output for vertex $v_i$. The crucial property of the gadget is that the intersection of the color outputs given by all the triples is non-empty if and only if the graph $H$ contains enough vertex-disjoint cycles. In other words, the existence of the appropriate number of vertex-disjoint cycles in $H$ will define an available color for each vertex $v_i$ of $G$.


We now proceed to the construction of the bifurcate gadget. First we need to introduce three other auxiliary gadgets. The first two ones, called \emph{expel} and \emph{double-expel} gadgets, are depicted in Fig. \ref{fig:CPE}. Formally, for two vertices $u$ and  $u'$, the expel gadget is defined as  $EG_{u, u'} =(\{u,u',v,v'\}, \{(u,v), (u,v'), (u',v), (u',v'), (v,v')\})$, and we ask for a cycle inside each such expel gadget. This gadget ensures that if $u$ is in another cycle, then $u'$ is necessarily used by the internal cycle and vice-versa.
Similarly, the double-expel gadget for three vertices $u$, $u'$, and $u''$ is defined as $DEG_{u, u',u''} =(\{u,u', u'', v, v'\}, \{(u,v), (u,v'), (u',v), (u'',v'), (u',u''), (v,v')\})$, and we also ask for a cycle inside each such gadget. This gadget ensures that if $u$ is in another cycle, then $u'$ and $u''$ are necessarily used by the internal cycle and that if $u'$ or $u''$ are in an external cycle, then $u$ is necessarily used by the internal cycle.

\begin{figure}[t!]
  \centering
  \begin{tikzpicture}[every node/.style={transform shape}, scale=0.5]
    \draw  (2,-1) \nodd{place2,} --  +(1,-1) \nodd{place2,} -- +(1,-3) \nodd{place2,} -- +(0,-4) \nodd{place2,} -- +(-1, -2) \nodd{place2,label=left:\huge{$a_i$}} -- +(0,0) -- +(0,-4);
    \draw  (2,-7) \nodd{place2,} --  +(1,-1) \nodd{place2,} -- +(1,-3) \nodd{place2,} -- +(0,-4) \nodd{place2,} -- +(-1, -2) \nodd{place2,label=left:\huge{$b_i$}} -- +(0,0) -- +(0,-4);
    \draw  (2,-13) \nodd{place2,} --  +(1,-1) \nodd{place2,} -- +(1,-3) \nodd{place2,} -- +(0,-4) \nodd{place2,} -- +(-1, -2) \nodd{place2,label=left:\huge{$c_i$}} -- +(0,0) -- +(0,-4);
    \draw (5,0) \nodd{place2,label=above:\huge{$a_{i,1}$}} -- +(-1,-1) \nodd{place2,} -- +(1,-1) \nodd{place2,} --cycle;
    \draw (8,0) \nodd{place2,label=above:\huge{$b_{i,1}$}} -- +(-1,-6) \nodd{place2,} -- +(1,-6) \nodd{place2,} --cycle;
    \draw (11,0) \nodd{place2,label=above:\huge{$c_{i,1}$}} -- +(-1,-12) \nodd{place2,} -- +(1,-12) \nodd{place2,} --cycle;
    \draw (4,-1) -- +(-1,-1) -- +(2,0);
    \draw (7,-6) -- +(-4,-2) -- +(2,0);
    \draw (10,-12) -- +(-7,-2) -- +(2,0);
    \draw (5,-2.5) \nodd{place2,} -- +(0,-3) \nodd{place2,} -- +(0,0) -- +(-2,-1.5) -- +(0,-3) -- (13,-4) \nodd{place2,label=20:\huge{$a_{i,2}$}} --cycle;
    \draw (5,-8.5) \nodd{place2,} -- +(0,-3) \nodd{place2,} -- +(0,0) -- +(-2,-1.5) -- +(0,-3) -- (13,-10) \nodd{place2,label=20:\huge{$b_{i,2}$}} --cycle;
    \draw (5,-14.5) \nodd{place2,} -- +(0,-3) \nodd{place2,} -- +(0,0) -- +(-2,-1.5) -- +(0,-3) -- (13,-16) \nodd{place2,label=20:\huge{$c_{i,2}$}} --cycle;
    \capt{5,-20}{\huge{(a) The gadget.}};

    \draw[line width = 1.5] (18,-1) --  +(1,-1) -- +(1,-3) -- +(0,-4) --cycle;
    \draw[line width = 1.5] (18,-7)  -- +(0,-4)  -- +(-1, -2) --cycle;
    \draw[line width = 1.5] (18,-13)  -- +(0,-4)  -- +(-1, -2) --cycle;
    \draw[line width = 1.5] (21, 0) -- +(1,-1) -- + (-1, -1) --cycle ;
    \draw[line width = 1.5] (21, -2.5) -- + (0, -3) -- (29,-4) --cycle;
    \draw[line width = 1.5] (19, -8) -- + (4, 2) -- + (6, 2) --cycle;
    \draw[line width = 1.5] (21, -8.5) -- + (-2, -1.5) -- + (0, -3) -- (29,-10) --cycle;
    \draw[line width = 1.5] (19, -14) -- + (7, 2) -- + (8.85, 2) --cycle;
    \draw[line width = 1.5] (21, -14.5) -- + (-2, -1.5) -- + (0, -3) --cycle;

    \draw  (18,-1) \nodd{place2,} --  +(1,-1) \nodd{place2,} -- +(1,-3) \nodd{place2,} -- +(0,-4) \nodd{place2,} -- +(-1, -2) \nodd{place2,label=left:\huge{$a_i$}} -- +(0,0) -- +(0,-4);
    \draw  (18,-7) \nodd{place2,} --  +(1,-1) \nodd{place2,} -- +(1,-3) \nodd{place2,} -- +(0,-4) \nodd{place2,} -- +(-1, -2) \nodd{place2,label=left:\huge{$b_i$}} -- +(0,0) -- +(0,-4);
    \draw  (18,-13) \nodd{place2,} --  +(1,-1) \nodd{place2,} -- +(1,-3) \nodd{place2,} -- +(0,-4) \nodd{place2,} -- +(-1, -2) \nodd{place2,label=left:\huge{$c_i$}} -- +(0,0) -- +(0,-4);
    \draw (21,0) \nodd{place2,label=above:\huge{$a_{i,1}$}} -- +(-1,-1) \nodd{place2,} -- +(1,-1) \nodd{place2,} --cycle;
    \draw (24,0) \nodd{place2,label=above:\huge{$b_{i,1}$}} -- +(-1,-6) \nodd{place2,} -- +(1,-6) \nodd{place2,} --cycle;
    \draw (27,0) \nodd{place2,label=above:\huge{$c_{i,1}$}} -- +(-1,-12) \nodd{place2,} -- +(1,-12) \nodd{place2,} --cycle;
    \draw (20,-1) -- +(-1,-1) -- +(2,0);
    \draw (23,-6) -- +(-4,-2) -- +(2,0);
    \draw (26,-12) -- +(-7,-2) -- +(2,0);
    \draw (21,-2.5) \nodd{place2,} -- +(0,-3) \nodd{place2,} -- +(0,0) -- +(-2,-1.5) -- +(0,-3) -- (29,-4) \nodd{place2,label=20:\huge{$a_{i,2}$}} --cycle;
    \draw (21,-8.5) \nodd{place2,} -- +(0,-3) \nodd{place2,} -- +(0,0) -- +(-2,-1.5) -- +(0,-3) -- (29,-10) \nodd{place2,label=20:\huge{$b_{i,2}$}} --cycle;
    \draw (21,-14.5) \nodd{place2,} -- +(0,-3) \nodd{place2,} -- +(0,0) -- +(-2,-1.5) -- +(0,-3) -- (29,-16) \nodd{place2,label=20:\begin{huge}$c_{i,2}$ \end{huge}} --cycle;
    \capt{21,-20}{\huge{(b) An example.}};
  \end{tikzpicture}

  \caption{Bifurcate gadget: To keep planarity, there is a path-crossing gadget in each edge intersection.}
  \label{fig:CPEC}
\end{figure}


As in our construction the edges of the expel gadgets will cross, we need a gadget that replaces each edge-crossing with a planar subgraph while preserving the existence of the original edges, in the sense that each of the crossing edges gets replaced by a path joining the endvertices of the original edge. This gadget is called \emph{path-crossing} gadget and is depicted in Fig. \ref{fig:CPPC}. Formally, the path-crossing gadget $PCG$ is such that
$ \{pc_{1}, pc_{2}, pc_{3}, pc_{4}, w_0, w_{ 1,1}, w_{ 1,2}, w_{ 2,1}, w_{ 2,2}, w_{ 3,1}, w_{ 3,2}, w_{ 4,1}, w_{ 4,2}\} \subseteq V(PCG)$, $E(PCG)$ contains two paths $pc_{1}, w_{ 1,1}, w_{ 1,2}, w_{0}, w_{ 3,2}, w_{ 3,1}, pc_{3}$ and $pc_{2}, w_{ 2,1}, w_{ 2,2}, w_{0}, w_{ 4,2}, w_{ 4,1}, pc_{4}$, and we add 4 expel gadgets $EG_{w_{1,1},w_{2,2}}$,  $EG_{w_{2,1},w_{3,2}}$,  $EG_{w_{3,1},w_{4,2}}$,  $EG_{w_{4,1},w_{1,2}}$ to $PCG$.
 We ask in this gadget only the 4 cycles asked in the expel gadgets. This gadget ensures that, in order to have enough vertex-disjoint cycles, an external cycle that contains an edge from a path-crossing gadget should go \emph{straight}, i.e., for all $\alpha \in [4]$, if the cycle arrives at a vertex $pc_{\alpha}$ it should leave by $pc_{(\alpha+1 \pmod{4})+1}$. If a cycle does not respect this property, we say that the cycle \emph{turns} inside the path-crossing gadget.
That is, the gadget preserves the existence of the original crossing edges whenever there are no cycles that turn inside it. Note that the two paths corresponding to the two original crossing edges cannot be used simultaneously by a set of cycles in the planar graph $H$. We can now define the bifurcate gadget, which is depicted in Fig.~\ref{fig:CPEC}(a), and where each of the 12 edge-crossings should be replaced by a path-crossing gadget. Note that each bifurcate gadgets contains 6 expel and 3 double-expel gadgets. We ask in this gadget the 48 cycles of the path-crossing gadgets, the 3 cycles of the double expel gadgets, and the 6 cycles of the expel gadgets. Note that, indeed, given a triple $\{a_i, b_i, c_i\}$ defining a color output for a vertex $v_i$, the cycles asked in the bifurcate gadget define two triples $\{a_{i,1}, b_{i,1}, c_{i,1}\}$ and  $\{a_{i,2}, b_{i,2}, c_{i,2}\}$, which in turn define two color outputs compatible with the one defined by $\{a_i, b_i, c_i\}$, in the sense that there is a common available color for $v_i$. For example, in Fig.~\ref{fig:CPEC}(b) vertex $a_i$ is the only selected vertex of $\{a_i, b_i, c_i\}$ (given by the corresponding SC$_i$-gadget, which is not shown in the figure for the sake of visibility), and the bold cycles define the selected vertices for the triples $\{a_{i,1}, b_{i,1}, c_{i,1}\}$ and  $\{a_{i,2}, b_{i,2}, c_{i,2}\}$. Note that color $a$ is simultaneously available for the three triples. We would like to stress that there are other choices of a maximum-cardinality set of cycles in the bifurcate gadget of Fig.~\ref{fig:CPEC}(b), but all of them yield color $a$ available.
For each vertex $v_i$, we need as many triples $\{a_{i,k}, b_{i,k}, c_{i,k}\}$ as $\degs{G}{v_i}$. For that, we concatenate the bifurcate gadgets $deg_G(v_i) -1$ times in the following way. Inductively, we consider the triple $\{a_{i,2}, b_{i,2}, c_{i,2}\}$ of Fig. \ref{fig:CPEC}(a) as the original triple $\{a_i, b_i, c_i\}$, and plug another bifurcate gadget starting from this triple.

With the gadgets defined so far, we have a representation of the colored vertices of $G$ in $H$. We now proceed to capture the edges of $G$ in $H$. For this, we introduce for each $\{v_i, v_j\} \in E$, $i,j \in [n]$, an \emph{edge} gadget depicted in Fig. \ref{fig:CPEG}, where all the 12 edge-crossings should be replaced by a path-crossing gadget. We ask in this gadget 51 new cycles (3 for the expel gadgets and 48 for the path-crossing gadgets). We plug one side of this gadget to a triple $\{a_{i,k}, b_{i,k}, c_{i,k}\}$ defining a color output of $v_i$ and the other side to a triple $\{a_{j,k'}, b_{j,k'}, c_{j,k'}\}$ defining a color output of $v_j$. The edge gadget ensures that the intersection of the two color outputs is empty.  This completes the construction of $H$, which is clearly a planar graph, and we set $\ell_0$ to be the sum of the number of cycles asked in each of the introduced gadgets.

\begin{figure}[t!]
  \centering
  \begin{tikzpicture}[every node/.style={transform shape}, scale=0.5]
    \draw (8,-3) \nodd{place2,} -- +(0,-2) \nodd{place2,} -- +(0,0) -- +(-5,-1) \nodd{place2,label=left:\huge{$a_{i,k}$}} -- +(0,-2) -- (12,-11) \nodd{place2,label=right:\huge{$a_{j,k}$}} --cycle;
    \draw (4,-7) \nodd{place2,} -- +(0,-2) \nodd{place2,} -- +(0,0) -- +(-1,-1) \nodd{place2,label=left:\huge{$b_{i,k}$}} -- +(0,-2) -- (12,-8) \nodd{place2,label=right:\huge{$b_{j,k'}$}} --cycle;
    \draw (4,-10) \nodd{place2,} -- +(0,-2) \nodd{place2,} -- +(0,0) -- +(-1,-1) \nodd{place2,label=left:\huge{$c_{i,k}$}} -- +(0,-2) -- (12,-4) \nodd{place2,label=right:\huge{$c_{j,k'}$}} --cycle;
  \end{tikzpicture}

  \caption{Edge gadget: To keep planarity, there is a path-crossing gadget in each edge intersection.}
  \label{fig:CPEG}
\end{figure}


\begin{claimN}\label{claim1}
In any solution of \textsc{Cycle Packing} in $H$, each expel gadget, double-expel gadget, and SC$_i$-gadget contains a cycle, and each cycle is contained inside such a gadget.
\end{claimN}
\begin{proof} In this proof, we say that a cycle $C$ \emph{kills} an another cycle $C'$ if, for any set $S$ of vertex-disjoint cycles containing $C$,  $(S \bs \{C\}) \cup \{C'\}$ is also a set of vertex-disjoint cycles. When dealing with a gadget $F$, we say that a cycle intersecting $F$ is \emph{internal} if it contains only vertices in $F$, and \emph{external} otherwise.

First note that any internal cycle in an expel or a double-expel gadget should use both vertices $v$ and $v'$. Also note that if some external cycle in an expel or a double-expel gadget uses the vertex $v$ or $v'$ of an expel or a double-expel gadget, then it also uses the vertex $u$ (or $u$ and $u''$), and then we are not able to find an internal cycle anymore. Therefore, any external cycle containing $v$ or $v'$ kills the cycle on the set of vertices $\{u, v, v'\}$ or $\{ u', u'', v', v''\}$.

Note that if an external cycle of a path-crossing gadget turns inside it, then without loss of generality it uses a path of the form $pc_1, w_{1,1}, w_{1,2}, w_0, w_{2,2}, w_{2,1}, pc_2$ inside the path-crossing gadget. This external cycle kills the cycle inside the expel gadget between $w_{1,1}$ and $w_{2,2}$. Moreover, note that another disjoint external cycle turning in the same path-crossing gadget kills another internal cycle in the path-crossing gadget, namely the one inside the expel gadget between $w_{3,1}$ and $w_{4,2}$.

Let $C$ be a cycle in $H$ that is not entirely contained in only one expel, double-expel, or SC$_i$-gadget. Because of the previous remarks, we have that $C$ cannot turn in two different path-crossing gadgets, and that if it does {\sl not} turn in any path-crossing gadget, then by construction it uses at least two expel or double-expel gadgets and kills their internal cycles. In both configurations, adding $C$ to the solution decreases the number of vertex-disjoint cycles that we can find in $H$.

The only remaining choice for $C$ is to turn exactly once in one path-crossing gadget. If it happens inside a bifurcate gadget, then $C$ uses vertices of two expel gadgets, namely $expel_1$ and $expel_2$, corresponding to two different colors.
The only way to connect vertices corresponding to different colors outside the path-crossing gadget is by using an SC$_i$-gadget.
So either $C$ kills the cycles of $expel_1$ and $expel_2$, or it may also use a path leading to an edge gadget. If $C$ turns in a path-crossing gadget inside an edge gadget, then the analysis is similar, but there is an extra case where the edge gadget representing the edge between $v_i$ and $v_j$ is directly plugged into the SC$_j$-gadget. In this case, note that none of the vertices $a_i, b_i, c_i$ can be a selected vertex with the set of cycles we currently ask for, and therefore in order to allow it we need to decrease the number of cycles in the solution. \end{proof}

\vspace{.2cm} 

If we are given a solution of \textsc{Planar Cycle Packing} in $H$, then for each $i \in [n]$, by Claim~\ref{claim1} the selection of a cycle in the SC$_i$-gadget selects a color for $v_i$, that can be any color that belongs simultaneously to all color outputs of $v_i$, and the edge gadgets ensure that two adjacent vertices are in two different color classes. So in this way we obtain a solution of  \textsc{Planar 3-Colorability} in $G$.

Conversely, given a solution of  \textsc{Planar  3-Colorability} in $G$, we construct a solution of \textsc{Planar Cycle Packing} in $H$ as follows. For each $i \in [n]$ we choose in the SC$_i$-gadget the cycle of length 4 that contains $u_{i,0}$ and the vertex in $\{a_i, b_i, c_i\}$ that corresponds to the color of $v_i$.
We also choose in the bifurcates gadgets the cycles selecting vertices in $\{a_{i,1}, b_{i,1}, c_{i,1}, a_{i,2}, b_{i,2}, c_{i,2}\}$ that lead to two identical color outputs coinciding with the color output of $\{a_i, b_i, c_i\}$. This choice has the property that the color output of $\{a_i, b_i, v_i\}$ is a subset of the color output of $\{a_{i,1}, b_{i,1}, c_{i,1}\}$ and the color output of $\{a_{i,2}, b_{i,2}, c_{i,2}\}$, and leaves as many free vertices as possible for other cycles in other gadgets.  Inside the edge gadget representing $\{v_i, v_j\} \in E$, we select the three cycles that are allowed by the free vertices. We complete our cycle selection by selecting a cycle in each expel gadget contained in a path-crossing gadget. By Claim~\ref{claim1}, this choice leads to a solution of {\sc Planar Cycle Packing} in $H$.

As the degree of each vertex in $G$ is bounded by 5, the number of gadgets we introduce for each $v_i \in V(G)$ to construct $H$ is also bounded by a constant, so
the total number of vertices of $H$ is linear in the number of vertices of $G$.
Therefore if we could solve {\sc Planar Cycle Packing} in time $2^{o(\sqrt{n})}\cdot n^{O(1)}$ then we could also solve {\sc Planar 3-coloring} in time $2^{o(\sqrt{n})} \cdot n^{O(1)}$, which is impossible by Theorem \ref{th:lb3c} unless the ETH fails.\end{proof}




\vspace{.15cm}

\paragraph{\textbf{\emph{Other problems of Type~2.}}} We can provide other examples of problems of Type~2. This is the case, for instance, of \textsc{Cycle Cover}, for which the lower bound has been proved in~\cite{CNP11}, and the upper bound can be proved similarly to Lemma~\ref{lem:CyclePacking}.

Other problems of Type~2 are those where one wants to {\sl maximize} the number of connected components induced by the vertices in a solution. It has been proved in~\cite{CNP11} that \textsc{Maximally Disconnected Dominating Set} cannot be solved in time $2^{o(\tw \log \tw)} \cdot n^{O(1)}$ unless the ETH fails. Again, the upper bound can be proved similarly to Lemma~\ref{lem:CyclePacking}. We can define more problems of this flavor, such as the following one.

\probls
{{\sc Maximally Disconnected Feedback Vertex Set}}
{A graph $G=(V,E)$ and two integers $\ell$ and $r$.}
{Does $G$ contain a feedback vertex set of size at most $\ell$ that induces at least $r$ connected components?}
{The treewidth $\tw$ of $G$.}

The following lemma can be proved by using the reduction given in~\cite{CNP11} for {\sc Maximally Disconnected Dominating Set}, just by appropriately redefining the so-called \emph{force} and \emph{one-in-many} gadgets.

\begin{lemma}
\textsc{Maximally Disconnected Feedback Vertex Set} cannot be solved in time $2^{o(\tw \log \tw)} \cdot n^{O(1)}$ unless the ETH fails.
\end{lemma}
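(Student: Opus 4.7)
The plan is to follow, step by step, the reduction of Cygan, Nederlof and Pilipczuk in~\cite{CNP11} that establishes the $2^{o(\tw \log \tw)} \cdot n^{O(1)}$ lower bound for \textsc{Maximally Disconnected Dominating Set}, and simply redesign the two ``atomic'' gadgets on which the whole construction relies, namely the \emph{force} gadget and the \emph{one-in-many} gadget. Recall that the global scheme of~\cite{CNP11} starts from a $k \times k$ permutation-style problem that requires time $2^{\Omega(k \log k)}$ under the ETH, builds a table of $O(k)$ columns of gadgets whose interaction encodes the clauses, and bounds the pathwidth of the resulting graph by $O(k)$. All propagation and clause-checking gadgets used there rely only on the external behaviour of \emph{force} and \emph{one-in-many}, so once we replace these two gadgets by analogues tailored to feedback vertex sets, the rest of the construction (and, in particular, the pathwidth analysis) can be reused verbatim.

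For the new force gadget, I would attach to the vertex $v$ to be forced a small subgraph consisting of several internally vertex-disjoint triangles all sharing $v$, together with pendant trees used for connectivity bookkeeping. A local case analysis then shows that any feedback vertex set of the gadget of minimum size must contain $v$: any alternative choice destroys at most one triangle and therefore requires at least one additional vertex, which the global budget $\ell$ is set to forbid. Symmetrically, a ``forbid'' version, built purely out of trees hanging from the designated vertex, guarantees that the designated vertex does not belong to any minimum feedback vertex set. This mirrors the two flavours of force gadgets used in~\cite{CNP11}.

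For the new one-in-many gadget, I would take $k$ candidate vertices $x_1,\dots,x_k$ and glue them to a common ``spine'' in such a way that for each $i$ there is a short cycle $C_i$ whose unique candidate vertex is $x_i$, and these cycles are otherwise internally disjoint. The gadget is calibrated so that (i) the local feedback-vertex-set budget pays for exactly one candidate, forcing at least one $x_i$ to be selected in order to hit all the $C_i$'s; and (ii) selecting more than one $x_i$ either exceeds the local budget or identifies connected components of $V\setminus S$ that the intended solution keeps separate, dropping the component count $r$ below the threshold. Just as in~\cite{CNP11}, this double-sided pressure, budget from below and disconnectedness from above, is what enforces the ``exactly one'' semantics.

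The main obstacle is the interaction with the \emph{maximally disconnected} objective, since connectivity of $V\setminus S$ depends non-locally on $S$ and therefore swapping gadgets can shift the global component count in subtle ways. The critical step is thus to verify, gadget by gadget, that (a) every deviation from the intended choice in the new \emph{force} or \emph{one-in-many} gadget either exceeds the local feedback-vertex-set budget or merges two components of $V\setminus S$ that were separated in the intended configuration, and (b) the intended configurations yield exactly the same global value of $r$ as in the reduction of~\cite{CNP11}. Once this accounting is checked, the graph has pathwidth $O(k)$ and an algorithm running in time $2^{o(\tw \log \tw)} \cdot n^{O(1)}$ for \textsc{Maximally Disconnected Feedback Vertex Set} would yield a $2^{o(k \log k)}$ algorithm for the source problem, contradicting the ETH.
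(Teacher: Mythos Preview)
Your proposal takes essentially the same approach as the paper: the paper's proof merely states that the lemma follows from the reduction of~\cite{CNP11} for \textsc{Maximally Disconnected Dominating Set} by ``appropriately redefining the so-called \emph{force} and \emph{one-in-many} gadgets,'' which is exactly your plan. One small caution on your one-in-many sketch: if the cycles $C_i$ are internally vertex-disjoint and each meets a unique candidate $x_i$, then a single $x_i$ cannot hit all of them, so the natural fix is a single cycle through all the $x_i$'s (with the budget forcing at least one and the component count forbidding more); with that adjustment the outline matches the paper.
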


And again, the following lemma can be proved using standard dynamic programming techniques.

\begin{lemma}\label{lem:algoDisconnectedFVS}
\textsc{Maximally Disconnected Feedback Vertex Set} can be solved in time $2^{O(\tw \log \tw)} \cdot n^{O(1)}$, and \textsc{Planar Maximally Disconnected Feedback Vertex Set} can be solved in time $2^{O(\tw)} \cdot n^{O(1)}$.
\end{lemma}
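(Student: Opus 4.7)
The plan is to use standard dynamic programming on a tree-decomposition for the general upper bound, and to replace it by a dynamic programming over a sphere-cut decomposition (as in Lemma~\ref{lem:CyclePacking}) in order to exploit Catalan-type structures for the planar upper bound.

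For the general case, let $(T,\sigma)$ be a nice tree-decomposition of $G$ of width $\tw$. For every node $t\in V(T)$, let $V_t$ denote the union of the bags in the subtree rooted at $t$, and let $B_t$ be the bag at $t$. A state at $t$ is a tuple $(A, P_S, P_F, s, c)$, where $A\subseteq B_t$ is the guess of $S\cap B_t$ (the part of the tentative feedback vertex set $S$ currently visible), $P_S$ is a partition of $A$ meant to encode in which connected components of $G[S\cap V_t]$ the vertices of $A$ lie, $P_F$ is a partition of $B_t\setminus A$ meant to encode in which trees of the forest $G[V_t\setminus S]$ the vertices of $B_t\setminus A$ lie, $s\in\{0,1,\dots,\ell\}$ records $|S\cap V_t|$, and $c\in\{0,1,\dots,n\}$ counts the number of connected components of $G[S\cap V_t]$ whose vertices have all been forgotten. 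The transitions at introduce, forget, and join nodes are standard: at an introduce node we add a new singleton class to $P_S$ or $P_F$ according to the guess on the new vertex; at an edge-introduction we either merge two classes of $P_S$ (when both endpoints lie in $S$) or of $P_F$ (when both lie outside $S$, rejecting the state if this creates a cycle, i.e.\ the two endpoints already belong to the same class of $P_F$); at a forget node we drop the forgotten vertex from its class, increment $c$ if a class of $P_S$ becomes empty, and reject if a class of $P_F$ becomes empty but contained a non-forgotten vertex which would witness an incomplete tree (this last bookkeeping is handled in the usual way). The answer is YES at the root if and only if there is a valid state with $s\leq\ell$ and $c\geq r$. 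Since the number of partitions of a set of size at most $\tw+1$ is $\tw^{O(\tw)}$ and the counters $s,c$ contribute only an $n^{O(1)}$ factor, the total running time is $2^{O(\tw\log\tw)}\cdot n^{O(1)}$.

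For the planar case, we replace the tree-decomposition by a sphere-cut decomposition $(T,\sigma,\pi)$ of width $\bw=O(\tw)$, rooted as in the proof of Lemma~\ref{lem:CyclePacking}. The states are defined analogously at every tree edge $e$ on the middle set $\mids(e)$, which corresponds to the vertices lying on the noose $O_e$. The crucial observation is that both partitions $P_S$ and $P_F$ arising at $e$ are forced to be \emph{non-crossing} along the cyclic order of $O_e$: any two classes that ``crossed'' would correspond to two connected subgraphs of $G_e$ whose vertex sets alternate along the noose, which is impossible since $G_e$ is drawn inside the disc bounded by $O_e$. Hence the number of admissible pairs $(A,P_S,P_F)$ per middle set is bounded by $2^{|\mids(e)|}$ (for $A$) times the number of non-crossing partitions on $|\mids(e)|$ elements, which by Kreweras' bound is at most $4^{|\mids(e)|}$. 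Therefore each table has size $2^{O(\bw)}\cdot n^{O(1)}$, and processing each internal edge of $T$ costs $2^{O(\bw)}\cdot n^{O(1)}$, yielding an overall running time of $2^{O(\tw)}\cdot n^{O(1)}$.

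The main obstacle is the correct handling of the acyclicity constraint on the non-$S$ side during join (in the tree-decomposition version) or merging (in the sc-decomposition version) nodes: whenever two subtrees contribute partial trees of the forest, a class present in both sides whose merging would close a cycle must be rejected, and when a class of $P_F$ becomes ``internal'' after being forgotten we must ensure that no further vertex of that tree remains to be processed. This is the standard, but delicate, argument used in connected/forest-type DPs, and it is where the use of non-crossing partitions (together with the fact that nooses separate the plane into two discs) is essential to preserve the $2^{O(\tw)}$ bound in the planar case. All remaining verifications are routine.
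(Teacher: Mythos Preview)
The paper does not actually prove this lemma: it only states that it ``can be proved using standard dynamic programming techniques'' and gives no argument at all. Your proposal is a faithful and essentially correct elaboration of exactly the approach the paper is gesturing at, namely a classical partition-based DP over a tree-decomposition for the general bound, and the same DP over a sphere-cut decomposition with the observation that the relevant partitions on a noose are non-crossing (the Catalan-structures argument the paper carries out explicitly in Lemma~\ref{lem:CyclePacking}) for the planar bound. So your approach and the paper's intended approach coincide.

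One small remark: your bookkeeping of the acyclicity constraint is stated somewhat loosely (the clause ``reject if a class of $P_F$ becomes empty but contained a non-forgotten vertex'' does not quite parse, and the join rule for forests deserves a bit more care, since one must verify that identifying the classes of the two child partitions does not close a cycle). These are indeed standard refinements and do not affect the claimed running times, but if you keep this proof in the paper it would be worth tightening that sentence.
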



\section{Problems of Type 3}
\label{sec:type3}
In this section we prove that the \textsc{Monochromatic Disjoint Paths} problem is of Type~3. We first need to introduce some definitions.
Let $G=(V,E)$ be a graph, let $k$ be an integer, and let $c : V \rightarrow \{0, \dots, k\}$ be a color function. Two colors $c_1$ and $c_2$ in $\{0, \dots, k\}$ are \emph{compatible}, and we denote it by $c_1 \equiv c_2$, if $c_1 = 0$, $c_2 = 0$, or $c_1 = c_2$. A path $P = x_1 \dots x_m$ in $G$ is \emph{monochromatic} if for all $i,j \in [m]$, $i \not = j$, $c(x_i)$ and $c(x_j)$ are two compatible colors. We let $c(P) = \max_{i \in [m]} (c(x_i))$.
We say that $P$ is \emph{colored} $x$ if $x = c(P)$.
Two monochromatic paths $P$ and $P'$ are \emph{color-compatible} if $c(P) \equiv c(P')$.

\probls
{\textsc{Monochromatic Disjoint Paths}}
{A graph $G=(V,E)$ of treewidth $\tw$, a color function $\gamma : V \rightarrow \{0, \dots, \tw\}$, an integer $m$, and a set $\NN = \{\NN_i = \{s_i, t_i\} | i \in [m], s_i,t_i \in V\}$.}
{Does $G$ contain $m$ pairwise vertex-disjoint monochromatic paths from $s_{i}$ to $t_{i}$, for $i \in [m]$?}
{The treewidth $\tw$ of $G$.}


The proof of the following lemma is inspired from the algorithm given in~\cite{Sch94} for the {\sc Disjoint Paths} problem on general graphs.

\begin{lemma}\label{lem:algoMonochromDisjointPaths}
\textsc{Monochromatic Disjoint Paths} can be solved in time $2^{O(\tw \log \tw)} \cdot n^{O(1)}$.
\end{lemma}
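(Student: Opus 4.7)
The plan is to solve the problem by dynamic programming on a nice tree decomposition of width $O(\tw)$, extending the classical DP of Scheffler~\cite{Sch94} for the uncolored \textsc{Disjoint Paths} problem by additionally tracking the color of each partial path in the state. First I would compute such a decomposition in FPT time. For every node $t$ with bag $B_t$ and subtree-induced graph $G_t$, the DP state will describe the intersection of a partial solution with $B_t$: it records a partition of the vertices of $B_t$ that lie on some partial path into blocks of size $1$ or $2$ (each block being the set of bag-endpoints of one partial path), together with, for each block, its color in $\{0,\dots,\tw\}$ and a label encoding how the block relates to the demand pairs $\{s_i,t_i\}$.

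The number of partitions into blocks of size at most $2$ of a set of size at most $\tw+1$ is $(\tw+1)^{O(\tw)}$, the number of color assignments to at most $\tw+1$ blocks is $(\tw+1)^{\tw+1}$, and the demand-type labels can be encoded using only $O(1)$ bits per block (whether a size-$1$ block's outside endpoint is a terminal of some demand, together with a local pairing of blocks that are to be joined into a common demand path) by re-indexing demands locally as they are introduced and forgotten. Altogether this yields at most $2^{O(\tw\log\tw)}$ distinct configurations per bag. Transitions at leaf, introduce-vertex, introduce-edge and forget nodes are then specified in the standard way: whenever a vertex $v$ of color $c=\gamma(v)$ is added to a piece, we verify that its color is compatible with the piece's current color (rejecting the configuration otherwise), and update the stored color to $c$ if the piece was previously of color $0$. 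At a join node the tables of the two children are combined by enumerating pairs of configurations and gluing their partitions block-wise, again checking color compatibility and the no-cycle condition at every merge.

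The main obstacle is the bookkeeping at the join step: merging two partial configurations sharing a bag can cause several pieces to coalesce into one, and we must ensure that (i) the resulting object remains a disjoint union of paths (no cycle is introduced and no vertex has total degree greater than $2$), (ii) the demand labels of the coalescing pieces are mutually consistent, and (iii) all colors involved form a compatible family, so that the merged piece can be assigned a well-defined color $c(P)$. This is exactly the combinatorial merging performed in Scheffler's DP, with only an extra constant-time color-compatibility check per merge; correctness then follows by induction on the tree decomposition. Since each of the $O(n)$ nodes of the decomposition is processed in time polynomial in the number $2^{O(\tw\log\tw)}$ of configurations, the overall running time is $2^{O(\tw\log\tw)}\cdot n^{O(1)}$.
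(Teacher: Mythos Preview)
Your approach is essentially the paper's: both extend Scheffler's dynamic programming for \textsc{Disjoint Paths} by additionally storing a color for each partial path, obtaining $2^{O(\tw\log\tw)}$ states per bag; the paper works over a rooted branch-decomposition while you use a nice tree decomposition, which is only a cosmetic difference.

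One point deserves care. You claim the demand-type information can be encoded with ``$O(1)$ bits per block'', but for a size-$1$ block whose hidden endpoint is a terminal $s_i$ of a demand whose mate $t_i$ has not yet been seen in $G_t$, you must record \emph{which} of the (at most $\tw+1$) currently split demands the block serves; otherwise the DP could later attach this block to the wrong $t_j$. This is exactly what the paper stores via its injection $\varphi$, and since the set of split demands is determined by the node, it contributes at most $(\tw+1)^{\tw+1}$ choices. Your phrase ``re-indexing demands locally'' suggests you have this in mind, but it costs $O(\log\tw)$ bits per block rather than $O(1)$. The final $2^{O(\tw\log\tw)}$ bound is unaffected, so once this is stated correctly the argument is complete.
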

\begin{proof} Again, we prove the lemma using branch-decomposition, which will lead the same asymptotic upper bounds in terms of the treewidth. Let $G$ be a colored graph and
let $\gamma: V(G) \rightarrow \{0,\dots, \tw\}$ be a coloring of $V(G)$.
Let $\{\NN_i=\{s_i,t_i\}\}_{i\in [m]}$ be the endvertices of the $m$ paths we are looking for, and
let $(T,\mu)$ a branch-decomposition of $G$ of width $\bw = \bw (G)$.
As in \cite{DPBF10}, we root $T$ by arbitrarily choosing an edge $e$ and subdivide it by inserting a new node $s$.
Let $e'$ and $e''$ be the new edges and set $\mids (e') = \mids (e'')= \mids (e)$.
We create a new root node $r$, connect it to $s$ by an edge $e_r$, and set $\mids(e_r) = \emptyset$. The root $e_r$ is not considered as a leaf.

Let now $e$ be an edge of $T$, let $X, P \subseteq \mids (e)$ with $X \cap P = \es$, and let $M, L$ be two disjoint matchings of $\mids (e) \bs (X \cup P)$. Let $\gamma_0: P \cup \set{M} \cup \set{L} \rightarrow \{0,\dots, \tw\}$ be a color function, and let $\varphi : P \rightarrow [m]$ be an injective function.
Intuitively, we want to keep track of the (partial) paths inside $G_e$, and to this end $P$ will correspond to the virtual sources of terminals, $M$ to the pairs of virtual sources to be linked by a path, $L$ to pairs of vertices $\{x,y\}$ such that there is a path in $G_e$ linking $x$ and $y$, and $X$ to vertices that are already inside a path or that are both an endpoint and a terminal.
We say that $mdp(G_e, \mids (e), X, P, M, L, \gamma_0,\varphi) = \true$ if the following conditions are fulfilled:

\begin{itemize}
\item[$\circ$] For all $\{s_i,t_i\}$ in $\NN \cap V(G_e)^2$,
\begin{itemize}
\item There exists a monochromatic path $s_i\dots t_i$ in $G_e$, or
\item There exist $\{s'_i, t'_i\} \in M$ and two monochromatic paths in $G_e$ $s_i\dots s'_i$ colored $\gamma_0(s'_i)$ and $t_i\dots t'_i$ colored $\gamma_0(t'_i)$ with $\gamma_0(s'_i) \equiv \gamma_0(t'_i)$.
\end{itemize}
\item[$\circ$] For all $\{s_i,t_i\}$ in $\NN$, such that $s_i \in V(G_e)$ and $t_i \not\in V(G_e)$ or vice-versa,
\begin{itemize}
\item There exist $s'_i \in P$ such that $\varphi(s'_i) = i$ and a monochromatic path $s_i=v_0\dots v_k=s'_i$ colored $\gamma_0(s'_i)$.
\end{itemize}
\item[$\circ$] For all $\{x_i,y_i\}$ in $L$,
\begin{itemize}
\item There exists in $G_e$ a monochromatic path $x_i\dots y_i$ colored $\max (\gamma_0(x_i),\gamma_0(y_i) )$.
\end{itemize}
\item[$\circ$] All these paths are vertex-disjoint and all vertices in $\mids (e)$ with degree at least 2 are in $X$.
\end{itemize}

Let $S_1 = (X_1, P_1, M_1, L_1, \gamma_1, \varphi_1)$ and $S_2 = (X_2, P_2, M_2, L_2, \gamma_2, \varphi_2)$ with $X_1, X_2, P_1,P_2, \dots $ defined as above. We define $G[S_1] = (P_1\cup \set{M_1} \cup \set{L_1}, \{\{x,y\} \in  L_1\})$ and colored by $\gamma_1$, and we define $G[S_2]$ analogously.
We say that $G[S_1, S_2]$ is \emph{defined} if for all $x\in V(G[S_1]) \cap V(G[S_2])$, $\gamma_1(x) \equiv \gamma_2(x)$, $X_1 \cap V(G[S_2]) = X_2 \cap V(G[S_1]) = X_1 \cap X_2 = \es$, and we define $G[S_1, S_2] = G[S_1] \cup G[S_2]$ and colored by $\gamma_{12}$  such that for all $x \in V(G[S_1,S_2])$, $\gamma_{12} = \max (\gamma_1(x), \gamma_2 (x))$.
Otherwise, we say that $G[S_1,S_2]$ is \emph{undefined}.

For each $e \in E(T)$, we define $\RR_e = \{(X, P, M, L, \gamma, \varphi) | X \subseteq \mids (e), P \subseteq \mids (e)$, $X \cap P = \es$, $M$ and $L$ are disjoint matchings on $ \mids (e) \bs (X \cup P)$, $\set{M} \cap \set{L} = \es$ \ and\ $mdp(G_e, \mids (e), X,P, M, L, \gamma, \varphi) = \true $. We want to know whether $(\es, \es, \es,\es, \es,\es) \in \RR_{e_r}$.  For each $e \in E(T)$, we can compute $\RR_e$ as follows:

\begin{itemize}
\item[$\circ$] if $e$ is a leaf, then $G_e = (\{x,y\}, \{(x,y)\}$, and
\begin{itemize}
\item if $\{x,y\} \in \NN$, then \\
$\RR_e = \{(\{x,y\}, \emptyset, \emptyset, \emptyset, \emptyset, \emptyset)\}$.
\item if $x \in \NN_i, y \in \NN_j$, $i \not = j$, then \\
$\RR_e = \{(\emptyset,  \{x,y\}, \emptyset, \emptyset,  \{(x,\gamma(x)),(y,\gamma(y))\}, \{(x,i), (y,j)\})\}$.
\item if $x \in \NN_i$ and $\forall j \in [m], y \not\in \NN_j$ and $\gamma(x) \not\equiv \gamma(y)$, then \\
$\RR_e = \{ (\es,\{x\}, \es, \es,  \{(x,\gamma(x)\}, \{(x,i)\})\}$.
\item if $x \in \NN_i$ and $\forall j \in [m], y \not\in \NN_j$ and $\gamma(x) \equiv \gamma(y)$, then
$\RR_e = \{ (\es,\{x\}, \es, \es,  \{(x,\gamma(x)\}, \{(x,i)\}),(\{x\}, \{y\}, \es, \es, \{(y,\max (\gamma(x), \gamma(y)))\}, \{(y,i)\})\}$.

\end{itemize}
\item[$\circ$] if $e$ is not a leaf, let $e_1$ and $e_2$ be the two children of $e$ in $E(T)$.
We construct $\RR_e$ as the set of all 6-tuples $(X, P, M, L, \gamma_0, \varphi)$ such that there exist  $S_1 = (X_1, P_1, M_1, L_1, \gamma_1, \varphi_1) \in \RR_{e_1}$ and $S_2 = (X_2, P_2, M_2, L_2, \gamma_2, \varphi_2) \in \RR_{e_2}$ fulfilling the following properties:

\begin{itemize}
\item $H = G[S_1, S_2]$ is defined;

\item For all $\{x_i, y_i\} \in L$, there exists a monochromatic path $x_i \dots y_i$ in $H$ and we have $\gamma_0 (x_i) = \gamma_0 (y_i) = \gamma_{12} (x_i \dots y_i)$;
\item All vertices in $\mids (e)$ of degree at least 2 in $G[S_1,S_2]$ are in $X$;
\item For all $\{v,w\} \in M_i$, $i \in \{1,2\}$, there is a monochromatic color-compatible path from $v$ to $w$ in $G[S_1,S_2]$ or two vertices $\{v', w'\} \in M$, and two monochromatic color-compatible paths $v \dots v'$ and $w \dots w'$ with $\gamma_0 (v') = \gamma_0 (w') = \max (\gamma_{12} (v \dots v'), \gamma_{12} (w \dots w'))$;
\item For all $i \in \{1,2\}$ an for all $v$ in $P_i$, there exist $w \in P$ and a monochromatic color-compatible path  $v \dots w$, or there exist $w \in P_{3-i}$ such that $\varphi_i(v) = \varphi_{3-i}(w)$ and a monochromatic path $v \dots w$ such that $\gamma_0(w) = \gamma_{12} (v\dots w)$;
\item All these paths are pairwise vertex-disjoint.
\end{itemize}
\end{itemize}

As in the graph $G[S_1,S_2]$ by construction all vertices have degree at most two, we can easily check all the previous properties in polynomial time, as we just have to compare two sets or traverse a path in $G[S_1,S_2]$ to verify each property. Therefore, we can compute each element of $\RR_e$ in time $\mbox{poly}(\mids (e))$. As $(X,P,\set {M}, \set {L})$ forms a partition of a subset of $\mids (e)$, there are at most $5^{\mids(e)}$ such 4-tuples. There are at most $\tw+1$ colors and at most $(\tw +1)^{\mids (e)}$ choices for $\gamma_0$.
As $|\{\varphi(x) | x\in P\}| \leq |P| \leq \mids (e)$, there are at most $\mids (e) ^ {\mids (e)}$ possible different color functions $\varphi$.
As $\bw -1 \leq \tw$ we have that for all $e$ in $E(T)$, $|\mids (e)| \leq \tw +1$, hence
 for all $e$ in $E(T)$, $ |\RR_e| \leq 5^{\tw+1} \cdot (\tw +1)^{2 (\tw+1)}$.
As for each $e \in E(T)$ such that $e$ is not a leaf, we have to merge the tables of the two children $e_1$ and $e_2$ of $e$, the above dynamic programming algorithm can solve \textsc{Monochromatic Disjoint Paths} in time $O(25^{\tw+1}\cdot (\tw +1)^{4 (\tw+1)} \cdot |V(G)|)$. Again, we note that  the constant can probably be optimized by using fast matrix multiplication~\cite{Wil12}.\end{proof}



\vspace{.3cm}

We need to define the \textsc{$k \times k$-Hitting Set} problem, first introduced in~\cite{LMS11a}.


\probls
{\textsc{$k \times k$-Hitting Set}}
{A family of sets $S_1,S_2, \dots,  S_m \subseteq [k]\times[k]$, such that each set contains at most one element from each row of $[k]\times[k]$.}
{Is there a set $S$ containing exactly one element from each row such that $S\cap S_i \not = \es$ for any $1 \leq i \leq m$?}
{$k$.}

\begin{theorem}[Lokshtanov \emph{et al}.~\cite{LMS11a}]
\label{th:hittingSet}
{\textsc{$k \times k$-Hitting Set}} cannot be solved in time $2^{o(k \log k)} \cdot m^{O(1)}$  unless the ETH fails.
\end{theorem}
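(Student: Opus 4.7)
The plan is to reduce from \textsc{3-SAT} under the ETH, choosing parameters so that a $2^{o(k\log k)}\cdot m^{O(1)}$ algorithm for \textsc{$k \times k$-Hitting Set} would yield a $2^{o(n)}$ algorithm for \textsc{3-SAT} on $n$ variables. Given a formula $\varphi$ on $n$ variables and $M$ clauses, I would set $k = \lceil n / \log n \rceil$ (padding with dummy variables if necessary so that $n = k \log k$) and partition the variables into $k$ blocks $B_1,\dots,B_k$ of size $\log k$ each. The $2^{\log k}=k$ truth assignments to block $B_i$ are identified with the $k$ cells of row $i$ of $[k]\times[k]$, so that a candidate hitting set $S$ (one cell per row) is exactly a truth assignment to $\varphi$.

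The core step is to design, for each clause $C_j$, a family of sets whose joint hitting enforces satisfaction of $C_j$. The main obstacle is the structural restriction that every set $S_i$ contains \emph{at most one element per row}: a clause on $b \le 3$ literals touches $b$ blocks, and within those blocks up to $k^{b-1}$ block-assignments satisfy $C_j$, so one cannot list the satisfying combinations inside a single set. The standard fix is to enumerate the falsifying configurations of the other blocks and encode the complement. More concretely, for each clause $C_j$ touching blocks $B_{i_1},\dots,B_{i_b}$ and each choice of assignments $\alpha_{i_2},\dots,\alpha_{i_b}$ to the blocks $B_{i_2},\dots,B_{i_b}$ that falsify every literal of $C_j$ belonging to those blocks, I would create a set $S_{j,\alpha}$ consisting of the single cell in row $i_1$ corresponding to any assignment of $B_{i_1}$ that satisfies the remaining literal of $C_j$. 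This produces at most $k^{b-1} = O(k^2)$ sets per clause, each containing at most one element per row, for a total instance size polynomial in $k$ and $M$.

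Correctness is then a routine check: a candidate row-selection $S$ hits every $S_{j,\alpha}$ iff for every clause $C_j$ and every falsifying assignment of $b-1$ of its blocks, the remaining block's chosen assignment satisfies the surviving literal, which is equivalent to $S$ satisfying $C_j$. Thus $\varphi$ is satisfiable iff the constructed \textsc{$k\times k$-Hitting Set} instance has a solution. Since $k \log k = n$, a hypothetical $2^{o(k \log k)}\cdot m^{O(1)}$ algorithm would decide \textsc{3-SAT} in time $2^{o(n)}\cdot \mathrm{poly}(n,M)$, contradicting the ETH. The delicate points I expect to spend the most care on are the padding to ensure $n = k\log k$ exactly, and verifying that even after the one-element-per-row encoding each clause blow-up stays polynomial so that $m$ remains polynomial in $n$ and hence the transfer of the ETH lower bound is clean.
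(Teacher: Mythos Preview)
First, note that the paper does not prove this theorem at all; it is quoted from Lokshtanov, Marx, and Saurabh and used as a black box in the proof of Theorem~\ref{th:lbmdp}. So there is no in-paper argument to compare your proposal against, and what you are really attempting is to reproduce (or replace) the original proof.

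Your direct reduction from \textsc{3-SAT} has a genuine gap in the clause encoding. As you describe it, the set $S_{j,\alpha}$ associated with a clause $C_j$ and a falsifying partial assignment $\alpha=(\alpha_{i_2},\dots,\alpha_{i_b})$ contains only cells from row $i_1$, namely the assignments of block $B_{i_1}$ that satisfy the literal of $C_j$ lying in that block. Two things go wrong. First, there are $k/2$ such assignments, all sitting in the single row $i_1$, so $S_{j,\alpha}$ already violates the ``at most one element per row'' restriction. Second, and more importantly, since $S_{j,\alpha}$ lives entirely in row $i_1$, whether a candidate solution $S$ hits it depends only on $S$'s choice in row $i_1$ and not at all on what $S$ picks in rows $i_2,\dots,i_b$. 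Hence ``$S$ hits every $S_{j,\alpha}$'' collapses to ``the assignment chosen for block $B_{i_1}$ satisfies the literal of $C_j$ in $B_{i_1}$'', which is strictly stronger than ``$S$ satisfies $C_j$''. Your claimed equivalence is therefore false, and the reduction does not preserve \textsc{Yes}-instances. The natural repair --- adding to $S_{j,\alpha}$ the cells $\{(i_t,\beta):\beta\neq\alpha_{i_t}\}$ so that the set is automatically hit whenever $S$ does not match $\alpha$ on rows $i_2,\dots,i_b$ --- puts $k-1$ elements in each of those rows and is again forbidden by the problem definition.

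This one-element-per-row restriction is precisely what makes a direct \textsc{3-SAT} encoding awkward, and the original proof does not attempt it. There the lower bound for \textsc{$k\times k$ Hitting Set} is obtained through a chain of intermediate problems: one first proves a $2^{o(k\log k)}$ lower bound for \textsc{$k\times k$ Clique} (this is where the block partition with $k\log k=\Theta(n)$ lives), then passes to \textsc{$k\times k$ Permutation Clique}/\textsc{Independent Set}, and only from there reduces to \textsc{$k\times k$ Hitting Set}; the additional permutation structure is exactly what allows pairwise constraints to be encoded by sets with one element per row. If you want a self-contained argument you will need either to reproduce that chain or to find a genuinely new encoding respecting the row restriction; the sketch as written does not.
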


We state the following theorem in terms of the pathwidth of the input graph, and as any graph $G$ satisfies $\tw(G) \leq \pw(G)$,  it implies the same lower bound in the treewidth.

\begin{theorem}
\label{th:lbmdp}
\textsc{Planar Monochromatic Disjoint Paths} cannot be solved in time $2^{o(\pw \log \pw)} \cdot n^{O(1)}$  unless the ETH fails.
\end{theorem}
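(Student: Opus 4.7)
The plan is to reduce from \textsc{$k \times k$-Hitting Set} to \textsc{Planar Monochromatic Disjoint Paths}, so that by Theorem~\ref{th:hittingSet} any algorithm running in time $2^{o(\pw \log \pw)} \cdot n^{O(1)}$ would solve the former in time $2^{o(k\log k)} \cdot m^{O(1)}$, contradicting the ETH. Given an instance with sets $S_1,\dots,S_m \subseteq [k]\times[k]$, I would build a planar colored graph $G$ of pathwidth $O(k)$, using $k+1$ colors, together with exactly $k$ terminal pairs $(s_i,t_i)$, with $s_i$ and $t_i$ colored $i$, such that a solution exists in $G$ if and only if the hitting set instance is a \YES-instance. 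The high-level correspondence is: the path of color $i$ from $s_i$ to $t_i$ encodes the choice of a column $v_i \in [k]$ for the $i$-th row of $[k]\times[k]$, and the routing constraints force the family $\{(i,v_i):i\in[k]\}$ to intersect every $S_j$.

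The construction is laid out as a linear sequence of $m$ \emph{verification gadgets} $V_1,\dots,V_m$ drawn from left to right in the plane, one per set of the hitting set instance. On the left (resp. right) boundary I place the sources $s_1,\dots,s_k$ (resp. targets $t_1,\dots,t_k$) stacked vertically. Between consecutive gadgets there are $k$ bundles of parallel \emph{transport wires}, one bundle per row $i$, where each bundle contains $k$ wires indexed by the possible column choices; the wire for column $\ell$ in bundle $i$ is colored $i$, and the transport portions between gadgets carry the ``current choice'' $v_i$ of path $i$. To enforce that the column choice remains the same across all gadgets, I would use the monochromatic constraint together with \emph{color-0} selector vertices placed so that any attempt by path $i$ to switch from wire $\ell$ to wire $\ell'$ between gadgets either breaks planarity with another path or forces a vertex collision. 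Inside $V_j$, the $k$ bundles of incoming wires interact through a small planar subgraph whose internal vertices are colored $0$: these color-$0$ vertices can be traversed by a path of any color, so paths of different colors may ``cross'' through them without violating the monochromatic condition, but vertex-disjointness allows only one of the paths to actually use each crossing vertex. The gadget $V_j$ is designed so that at least one of the $k$ paths must route through a color-$0$ ``validator'' vertex representing some pair $(i,\ell)\in S_j$, and this is possible with the remaining $k-1$ paths completing their traversal if and only if some chosen pair $(i,v_i)$ lies in $S_j$.

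The pathwidth bound follows from the left-to-right structure: a path-decomposition sweeping through the gadgets contains, at each point between two consecutive gadgets, only the $O(k^2)$ transport wire endpoints in its bag (or even $O(k)$ if I compress a bundle into a single interface using a narrow ``selector'' attached to each gadget); each individual gadget $V_j$ can itself be decomposed with pathwidth $O(k)$, since it has $O(k)$ boundary vertices and constant depth internally. Planarity is preserved by construction, since the only necessary ``crossings'' are between paths of distinct colors and are realized through planar sub\-gadgets of color-$0$ vertices rather than physical edge crossings. The total number of vertices is $O(mk^2)$, polynomial in the input size.

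The main obstacle will be designing the verification gadget $V_j$ so that it simultaneously (i) is planar with $O(k)$ boundary vertices, (ii) forces at least one path of color $i$ to cross a validator representing an element of $S_j$, (iii) does not create spurious solutions when no chosen pair hits $S_j$, and (iv) couples correctly with the transport wires so that column choices propagate faithfully between gadgets. This requires a careful placement of color-$0$ ``flexibility'' vertices interleaved with colored ``forcing'' vertices, analogously to the framework of~\cite{LMS11a}, where the monochromatic constraint plays the role that vertex-label constraints play in the general \textsc{Disjoint Paths} reduction. Once such a gadget is in place, correctness of the reduction follows by a standard two-way argument: from a hitting set solution $(v_1,\dots,v_k)$ one routes path $i$ along wire $v_i$ through every gadget, using the chosen validator of each $V_j$ to resolve the crossing; conversely, from any valid set of $k$ vertex-disjoint monochromatic paths one reads off consistent columns $v_i$ from the transport wires and checks that the validator forced in each $V_j$ witnesses $\{(i,v_i)\}\cap S_j \neq \emptyset$.
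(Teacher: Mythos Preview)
Your overall framework---reducing from \textsc{$k\times k$-Hitting Set} with one terminal pair per row and one gadget per set $S_j$, arranged left-to-right so that a sweep decomposition bounds the pathwidth---is exactly the paper's approach. However, your color assignment creates a genuine gap.

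You color $s_i,t_i$ with color $i$, so color encodes the \emph{row}; the column choice $v_i$ is then encoded \emph{positionally} by which of the $k$ wires in bundle $i$ the path follows. With $k$ wires per bundle and $k$ bundles, the transport layer between consecutive gadgets already has $\Theta(k^2)$ vertices, forcing pathwidth $\Omega(k^2)$; an algorithm in time $2^{o(\pw\log\pw)}$ would then only yield a $2^{o(k^2\log k)}$ algorithm for \textsc{$k\times k$-Hitting Set}, which does not contradict Theorem~\ref{th:hittingSet}. You anticipate this and propose compressing each bundle to $O(1)$ interface vertices via a selector, but once you do that the monochromatic constraint cannot enforce column consistency: all wires in bundle $i$ carry color $i$, so nothing prevents path $i$ from choosing column $v_i$ at one selector and a different column at the next. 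Your suggested fix (``use the monochromatic constraint together with color-$0$ selector vertices'') cannot work precisely because all options within a bundle are color-indistinguishable.

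The paper resolves this by inverting the encoding: the terminals $s_r,t_r$ are left with color $0$, and a \emph{color-selection} gadget at the start of row $r$ forces the $s_r$--$t_r$ path to commit to some color $c\in[k]$; this color \emph{is} the column choice, and consistency across all $m$ set gadgets is then automatic from monochromaticity. Inside the $i$-th set gadget there is a single colored vertex $a_{r,i}$ per row (colored with the unique $c$ such that $(r,c)\in S_i$, or absent), and a vertical chain of expel gadgets linking the rows forces at least one row's path through its $a_{r,i}$. With only $O(1)$ vertices per row per gadget, the pathwidth is $O(k)$ and the reduction goes through.
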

\begin{proof}
We reduce from \textsc{$k \times k$-Hitting Set}. Let $k$ be an integer and $S_1, S_2 , \dots,  S_m \subseteq [k]\times[k]$ such that each set contains at most one element from each row of $[k]\times[k]$.
We will first present an overview of the reduction with all the involved gadgets, and then we will provide a formal definition of the constructed planar graph $G$.

We construct a gadget for each row $\{r\} \times [k]$, $r \in [k]$, which selects the unique pair $p$ of $S$ in this row.
First, for each $r \in [k]$, we introduce two new vertices $s_r$ and $t_r$, a request $\{s_r,t_r\}$, $m+1$ vertices $v_{r,i}$, $i \in \{0, \dots, m\}$, and $m+2$ edges $\{e_{r,0} = (s_r,v_{r,0})\} \cup \{e_{r,i} = (v_{r,i-1},v_{r,i}) | i \in [m] \} \cup  \{e_{r,m+1} = (v_{r,m},t_r)\}$. That is, we have a path with $m+2$ edges between $s_r$ and $t_r$.

Each edge of these paths, except the last one, will be replaced with an appropriate gadget.
Namely, for each $r \in [k]$, we replace the edge $e_{r,0}$ with the gadget depicted in Fig.~\ref{fig:CS}, which we call \emph{color-selection} gadget.
In this figure, vertex $u_{r,i}$ is colored $i$.
The color used by the path from $s_r$ to $t_r$ in the color-selection gadget will define the pair of the solution of $S$ in the row $\{r\} \times [k]$.

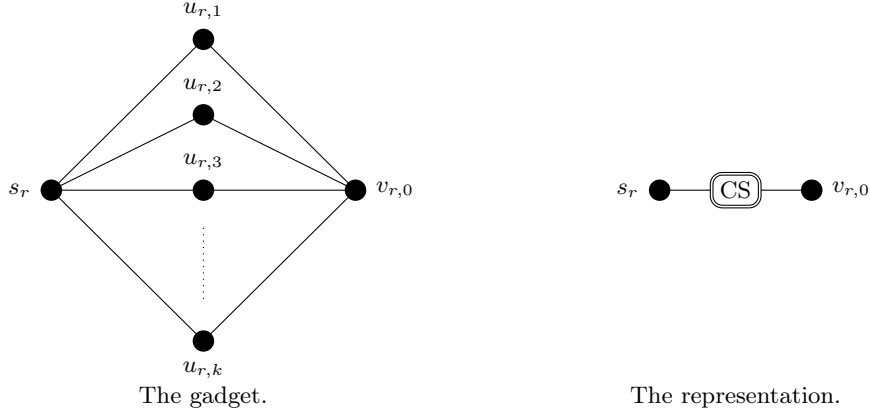
\begin{figure}[h]
  \centering
  \begin{tikzpicture}
    \draw (0,0)
    -- +(  2,  2) \nodd{label=above:$u_{r,1}$}
    -- +(  4,  0) ;
    \draw (0,  0)
    -- +(  2,  1) \nodd{label=above:$u_{r,2}$}
    -- +(  4,  0) ;
    \draw (0,  0)
    -- +(  2,  0) \nodd{label=above:$u_{r,3}$}
    -- +(  4,  0) ;
    \draw (0,  0) \nodd{label=left:$s_r$}
    -- +(+ 2,- 2) \nodd{label=below:$u_{r,k}$}
    -- +(  4,  0) \nodd{label=right:$v_{r,0}$};
    \draw[dotted] (2, -0.5) -- +(0,-1);
    \capt{1,-3}{The gadget.};
    \draw (8, 0) \nodd{label=left:$s_r$}
    -- +(1,0) \noddd{}{CS}
    -- +(2,0) \nodd{label=right:$v_{r,0}$};
    \capt{8,-3}{The representation.};
  \end{tikzpicture}

  \caption{Color-selection gadget, where $u_{r,i}$ is colored $c_i$ for each $i \in [k]$.}
  \label{fig:CS}
\end{figure}

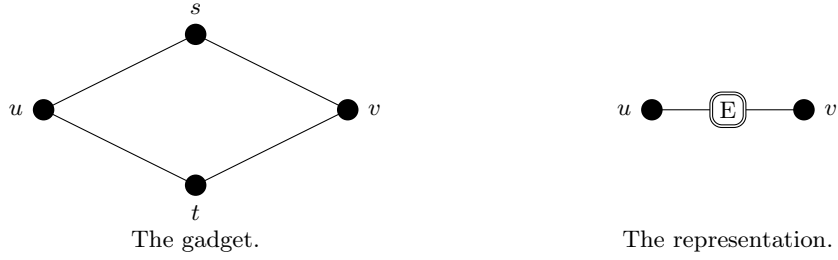
\begin{figure}[h]
  \centering
  \begin{tikzpicture}
    \draw (0,0) \nodd{label=above:$s$} -- +(2,-1) \nodd{label=right:$v$} -- +(0,-2) \nodd{label=below:$t$} -- +(-2,-1) \nodd{label=left:$u$} --cycle;
    \capt{-1,-3}{The gadget.};
    \draw (6,-1) \nodd{label=left:$u$} -- +(1,0) \noddd{}{E} -- +(2,0) \nodd{label=right:$v$};
    \capt{6,-3}{The representation.};
  \end{tikzpicture}

  \caption{Expel gadget.}
  \label{fig:EE}
\end{figure}

Now that we have described the gadgets that allow to define $S$, we need to ensure that $S\cap S_i \not = \es$ for any $i \in [m]$.
For this, we need the gadget depicted in Fig.~\ref{fig:EE}, which we call \emph{expel} gadget. Each time we introduce this gadget, we add to $\mathcal{N}$ the request $\{s,t\}$.
This new requested path uses either vertex $u$ or vertex $v$, so only one of these vertices can be used by other paths.
For each $i \in [m]$, we replace all the edges $\{e_{r,i} | r \in [k]\}$ with the gadget depicted in Fig. \ref{fig:S}, which we call \emph{set} gadget.
In this figure, $a_{r,i}$ is such that if $(\{r\} \times [k]) \cap S_i = \{\{r,c_{r,i}\}\}$ then $a_{r,i}$ is colored $c_{r,i}$, and if $(\{r\} \times [k]) \cap S_i = \es$ then vertex $a_{r,i}$ is removed from the gadget.

\begin{figure}[t!]
  \vspace{-.4cm}
  \centering
  \begin{tikzpicture}
    \draw ( 0, 0) \nodd{label=left:$v_{1,i-1}$}
    -- +  ( 2, 0) \nodd{label=above:$a_{1,i}$}
    -- +  ( 4, 0) \nodd{label=right:$v_{1,i}$}
    ;
    \draw ( 0, 0)
    -- +  ( 2,-1)
    -- +  ( 4, 0)
    ;
    \draw ( 0,-3) \nodd{label=left:$v_{2,i-1}$}
    -- +  ( 2, 0) \nodd{label=above:$a_{2,i}$}
    -- +  ( 4, 0) \nodd{label=right:$v_{2,i}$}
    ;
    \draw ( 0,-3)
    -- +  ( 2, 1)
    -- +  ( 4, 0)
    ;
    \draw ( 0,-3)
    -- +  ( 2,-1)
    -- +  ( 4, 0)
    ;
    \draw ( 0,-6) \nodd{label=left:$v_{3,i-1}$}
    -- +  ( 2, 0) \nodd{label=above:$a_{3,i}$}
    -- +  ( 4, 0) \nodd{label=right:$v_{3,i}$}
    ;
    \draw ( 0,-6)
    -- +  ( 2, 1)
    -- +  ( 4, 0)
    ;
    \draw ( 0,-6)
    -- +  ( 2,-1)
    -- +  ( 4, 0)
    ;
    \draw ( 0,-10) \nodd{label=left:$v_{k,i-1}$}
    -- +  ( 2, 0) \nodd{label=above:$a_{k,i}$}
    -- +  ( 4, 0) \nodd{label=right:$v_{k,i}$}
    ;
    \draw ( 0,-10)
    -- +  ( 2, 1)
    -- +  ( 4, 0)
    ;
    \draw ( 2, -1) \nodd{label=right:$w_{1,i,2}$} -- +(0, -0.5) \noddd{}{E} -- +(0,-1) \nodd{label=right:$w_{2,i,1}$};
    \draw ( 2, -4) \nodd{label=right:$w_{2,i,2}$} -- +(0, -0.5) \noddd{}{E} -- +(0,-1) \nodd{label=right:$w_{3,i,1}$};
    \draw[dotted] (2, -7) \nodd{label=right:$w_{3,i,2}$} -- +(0, -2) \nodd{label=right:$w_{k,i,1}$};

    \capt{1,-11}{The gadgets.};

    \draw ( 9,-3) \nodd{label=left:$v_{1,i-1}$}
    -- +  ( 2, 0) \nodd{label=right:$v_{1,i}$}
    ;
    \draw ( 9,-4) \nodd{label=left:$v_{2,i-1}$}
    -- +  ( 2, 0) \nodd{label=right:$v_{2,i}$}
    ;
    \draw ( 9,-5) \nodd{label=left:$v_{3,i-1}$}
    -- +  ( 2, 0) \nodd{label=right:$v_{3,i}$}
    ;
    \draw ( 9,-7) \nodd{label=left:$v_{k,i-1}$}
    -- +  ( 2, 0) \nodd{label=right:$v_{k,i}$}
    ;
    \draw[dotted] ( 10,-5) -- +  ( 0,-2);
    \draw (10,-7)  \noddd{}{SET$_i$};
    \draw ( 10,-3) \noddd{}{SET$_i$}
    -- +  ( 0,-1) \noddd{}{SET$_i$}
    -- +  ( 0,-2) \noddd{}{SET$_i$}
    ;
    \capt{9,-11}{The representation.};
  \end{tikzpicture}

  \caption{Set gadgets.}
  \label{fig:S}
    \vspace{-.4cm}
\end{figure}
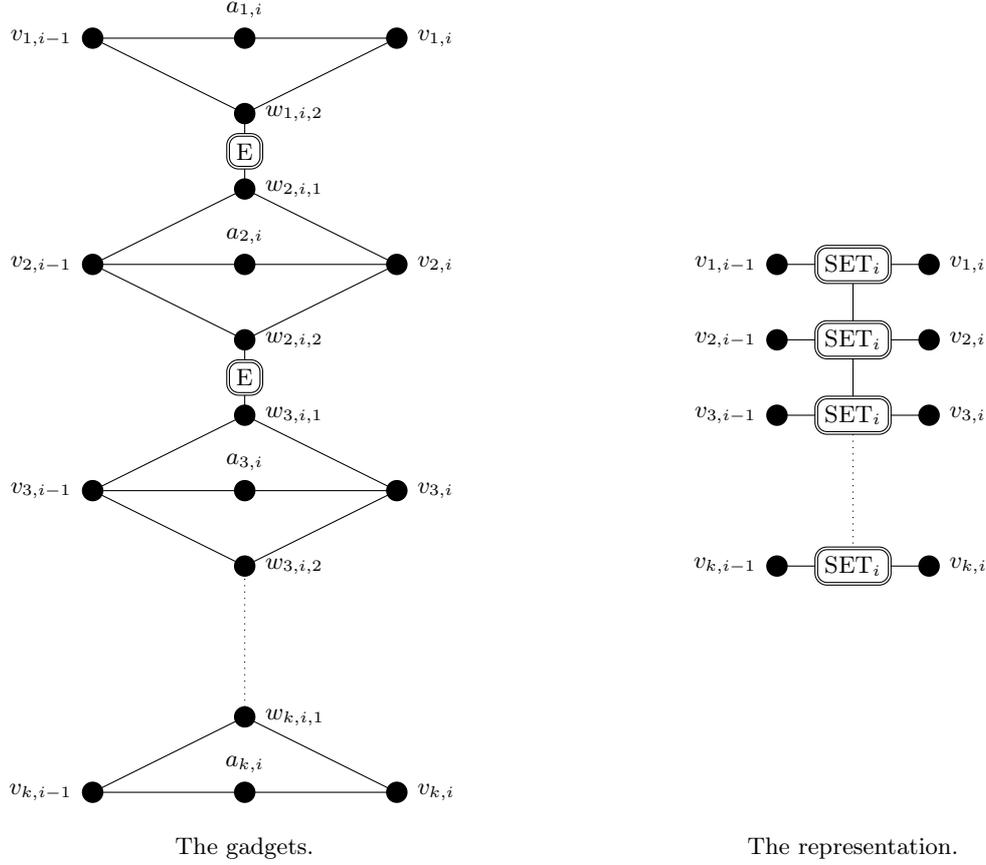

This completes the construction of the graph $G$, which is illustrated in Fig.~\ref{fig:FG}. Note that $G$ is indeed planar. Formally, the graph we obtain is $G = (V,E)$, where $V = \{s_r | r \in [k] \} \cup \{ t_r | r \in [k] \} \cup \{ v_{r,i} | r \in [k], i \in \{0,m\} \} \cup \{ u_{r,c} | r \in [k], c \in [k]\} \cup (\{ w_{r,i,b} | r \in [k], i \in [m], i \in \{1,2\} \} \bs \{w_{r,i,b} | i \in [m], (r,b) \in \{(1,1), (k,2)\}\} )\cup \{s_{r,i} | r \in [k-1], i \in [m]\} \cup \{ t_{r,i} | r \in [k-1], i \in [m]\} \cup \{ a_{r,i} | \exists c \in [k], (r,c) \in S_i$ and
$E = \{\{s_r, u_{r,c}\} \in V^2 | r \in [k], c \in [k] \} \cup \{\{u_{r,c}, v_{r,0}\} \in V^2 | r \in [k], c \in [k]\} \cup \{\{v_{r,i-1},w_{r,i,b}\} \in V^2 | r \in [k], i \in [m], b \in \{1,2\}\} \cup \{\{w_{r,i,b},v_{r,i}\} \in V^2 | r \in [k], i \in [m], b \in \{1,2\}\} \cup \{\{v_{r,i-1},a_{r,i}\} \in V^2 | r \in [k], i \in [m] \} \cup \{\{a_{r,i},v_{r,i}\} \in V^2 | r \in [k], i \in [m]\} \cup \{\{v_{r,m}, t_r\} \in V^2 | r \in [k]\} \cup \{\{s_{r,i}, w_{r,i,2} \} \in V^2 | r \in [k-1], i \in [m] \} \cup \{\{s_{r,i},w_{r+1,i,1}\} \in V^2 | r \in [k-1], i \in [m] \} \cup \{\{ t_{r,i}, w_{r,i,2}\} \in V^2 | r \in [k-1], i \in [m]\} \cup \{\{ t_{r,i},w_{r+1,i,1}\} \in V^2 | r \in [k-1], i \in [m]\}$.

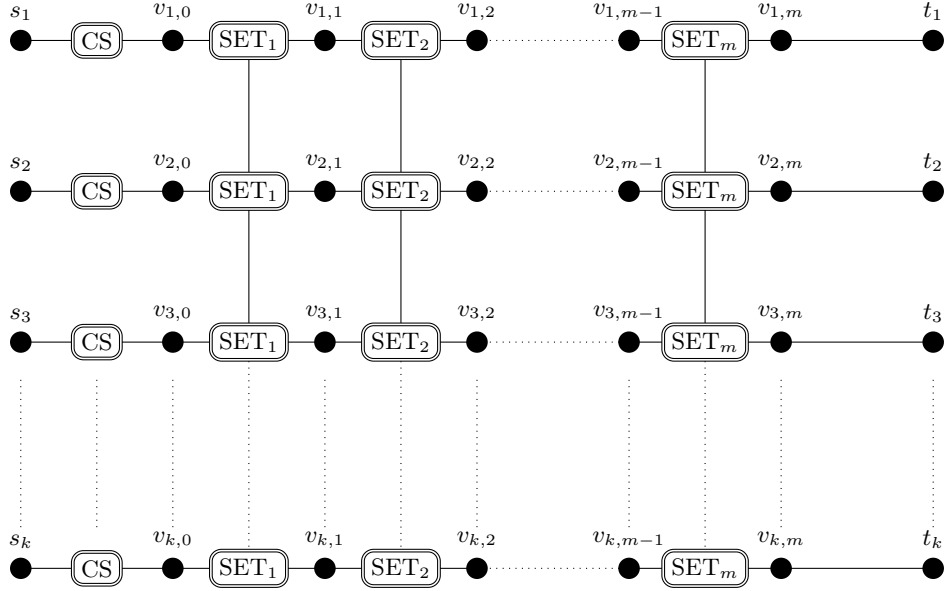
\begin{figure}[t]
  \centering
  \begin{tikzpicture}
    \draw ( 0, 0) \nodd{label=above:$s_1$}
    -- +  ( 1, 0) \noddd{}{CS}
    -- +  ( 2, 0) \nodd{label=above:$v_{1,0}$}
    -- +  ( 4, 0) \nodd{label=above:$v_{1,1}$}
    -- +  ( 6, 0) \nodd{label=above:$v_{1,2}$}
    ;
    \draw[dotted] (6,0) -- (8,0);
    \draw ( 8, 0) \nodd{label=above:$v_{1,m-1}$}
    -- +  (2, 0) \nodd{label=above:$v_{1,m}$}
    -- +  (4, 0) \nodd{label=above:$t_1$}
    ;

    \draw ( 0, -2) \nodd{label=above:$s_2$}
    -- +  ( 1, 0) \noddd{}{CS}
    -- +  ( 2, 0) \nodd{label=above:$v_{2,0}$}
    -- +  ( 4, 0) \nodd{label=above:$v_{2,1}$}
    -- +  ( 6, 0) \nodd{label=above:$v_{2,2}$}
    ;
    \draw[dotted] (6,-2) -- (8,-2);
    \draw ( 8, -2) \nodd{label=above:$v_{2,m-1}$}
    -- +  (2, 0) \nodd{label=above:$v_{2,m}$}
    -- +  (4, 0) \nodd{label=above:$t_2$}
    ;
    \draw ( 0, -4) \nodd{label=above:$s_3$}
    -- +  ( 1, 0) \noddd{}{CS}
    -- +  ( 2, 0) \nodd{label=above:$v_{3,0}$}
    -- +  ( 4, 0) \nodd{label=above:$v_{3,1}$}
    -- +  ( 6, 0) \nodd{label=above:$v_{3,2}$}
    ;
    \draw[dotted] (6,-4) -- (8,-4);
    \draw ( 8, -4) \nodd{label=above:$v_{3,m-1}$}
    -- +  (2, 0) \nodd{label=above:$v_{3,m}$}
    -- +  (4, 0) \nodd{label=above:$t_3$}
    ;

    \draw ( 0, -7) \nodd{label=above:$s_k$}
    -- +  ( 1, 0) \noddd{}{CS}
    -- +  ( 2, 0) \nodd{label=above:$v_{k,0}$}
    -- +  ( 4, 0) \nodd{label=above:$v_{k,1}$}
    -- +  ( 6, 0) \nodd{label=above:$v_{k,2}$}
    ;
    \draw[dotted] (6,-7) -- (8,-7);
    \draw ( 8, -7) \nodd{label=above:$v_{k,m-1}$}
    -- +  (2, 0) \nodd{label=above:$v_{k,m}$}
    -- +  (4, 0) \nodd{label=above:$t_k$}
    ;
    \draw[dotted] (3,-4) -- +(0,-3);
    \draw[dotted] (5,-4) -- +(0,-3);
    \draw[dotted] (9,-4) -- +(0,-3);
    \draw ( 3, 0) \noddd{}{SET$_1$}
    -- +  ( 0,-2) \noddd{}{SET$_1$}
    -- +  ( 0,-4) \noddd{}{SET$_1$}
    ;
    \draw ( 5, 0) \noddd{}{SET$_2$}
    -- +  ( 0,-2) \noddd{}{SET$_2$}
    -- +  ( 0,-4) \noddd{}{SET$_2$}
    ;
    \draw ( 9, 0) \noddd{}{SET$_m$}
    -- +  ( 0,-2) \noddd{}{SET$_m$}
    -- +  ( 0,-4) \noddd{}{SET$_m$}
    ;
    \draw ( 3, -7) \noddd{}{SET$_1$};
    \draw ( 5, -7) \noddd{}{SET$_2$};
    \draw ( 9, -7) \noddd{}{SET$_m$};
    \draw[dotted] (0,-4.5) -- +(0,-2);
    \draw[dotted] (1,-4.5) -- +(0,-2);
    \draw[dotted] (2,-4.5) -- +(0,-2);
    \draw[dotted] (4,-4.5) -- +(0,-2);
    \draw[dotted] (6,-4.5) -- +(0,-2);
    \draw[dotted] (8,-4.5) -- +(0,-2);
    \draw[dotted] (10,-4.5) -- +(0,-2);
    \draw[dotted] (12,-4.5) -- +(0,-2);
  \end{tikzpicture}

  \caption{Final graph $G$ in the reduction of Theorem \ref{th:lbmdp}.}
  \label{fig:FG}
\end{figure}

The color function $\gamma$ of $G$ is defined such that for each $r \in [k]$ and $c \in [k]$, $\gamma(u_{r,c}) = c$, and for each $i \in [m]$ and $(r,c) \in S_i$, $\gamma(a_{r,i}) = c$. For any other vertex $v \in V(G)$, we set $\gamma (v) = 0$. Finally, the input of \textsc{Planar Monochromatic Disjoint Paths} is the planar graph $G$, the color function $\gamma$, and the $k+(k-1)\cdot m$ requests $\NN = \{\{s_r,t_r\} | r \in [k]\} \cup \{\{s_{r,i},t_{r,i}\}| r \in [k-1], i \in [m]\}$, the second set of requests corresponding to the ones introduced by the expel gadgets.

Note that because of the expel gadgets,  the request $\{s_r, t_r\}$ imposes a path between $v_{r,i-1}$ and $v_{r,i}$ for each $r \in [k]$.
Note also that because of the expel gadgets, at least one of the paths between $v_{r,i-1}$ and $v_{r,i}$ should use an $a_{r,i}$ vertex, as otherwise at least two paths would intersect.
Conversely, if one path uses a vertex $a_{r,i}$, then we can find all the desired paths in the corresponding set gadgets by using the vertices $w_{r,i,b}$.

Given a solution of  \textsc{Planar Monochromatic Disjoint Paths} in $G$,  we can construct a solution of \textsc{$k \times k$-Hitting Set} by letting $S = \{(r, c) | r \in [k]$ such that the path from $s_r$ to $t_r$ is colored with color $c \}$.
We have that $S$ contains exactly one element of each row, so we just have to check if $S \cap S_i \not = \es$ for each $i \in [m]$.
Because of the property of the set gadgets mentioned above, for each $i \in [m]$, the set gadget labeled $i$ ensures that $S \cap S_i \not = \es$.

Conversely, given a solution $S$ of  \textsc{$k \times k$-Hitting Set}, for each $\{r,c\} \in S$ we color the path from $s_r$ to $t_r$ with  color $c$. We assign an arbitrary coloring to the other paths.
For each $i \in [m]$, we take $\{r,c \} \in S \cap S_i$ and in the set gadget labeled $i$, we impose that the path from $v_{r, i-1}$ to $v_{r,i}$ uses  vertex $a_{r,i}$. By using the vertices $w_{r,i,b}$ for the other paths, we find the desired $k + (k-1)\cdot m$ monochromatic paths.

Let us now argue about the pathwidth of $G$. We define for each $r,c \in [k]$ the bag $B_{0,r,c} = \{s_{r'} | r' \in [k]\} \cup \{ v_{r',0} | r' \in [k]\} \cup \{ u_{r,c} \}$, for each $i \in [m]$, the bag $B_i = \{v_{r,i-1} | r \in [k]\} \cup \{v_{r,i} | r \in [k]\} \cup \{a_{r,i}\in V(G) | r \in [k]\} \cup \{w_{r,i,b}\in V(G) | r \in [k], b \in [2]\} \cup \{ s_{r,i} | r \in [m-1]\} \cup \{ t_{r,i} | r \in [m-1]\}$, and the bag $B_{m+1} = \{v_{r,m} | r \in [k] \} \cup \{ t_r | r\in [k]\}$. We note that the size of each bag is at most $ 2\cdot (k-1) + 5\cdot k-2 = O(k)$. A path decomposition of $G$ consists of all bags $B_{0,r,c}$, $r,c \in [k]$ and $B_i$, $i \in [m+1]$ and edges  $\{B_{i}, B_{i+1}\}$ for each $i \in [m]$, $\{B_{0,r,c},B_{0,r,c+1}\}$ for $r \in [k]$, $c\in [k-1]$, $\{B_{0,r,k}, B_{0,r+1,1}\}$ for $r \in [k]$, and $\{B_{0,k,k},B_1\}$. Therefore, as we have  that $\pw (G) = O(k)$,  if one could solve \textsc{Planar Monochromatic Disjoint Paths} in time $2^{o(\pw \log \pw)} \cdot n^{O(1)}$, then one could also solve {\textsc{$k \times k$-Hitting Set}} in time $2^{o(k \log k)} \cdot m^{O(1)}$, which is impossible by Theorem~\ref{th:hittingSet} unless the ETH fails.\end{proof}

%

\section{Lower bound for Planar Disjoint Paths }
\label{sec:pdp}
In this section we prove that, assuming the ETH, the \textsc{Planar Disjoint Paths} problem  cannot be solved in time $2^{o(\tw)} \cdot n^{O(1)}$.

\probls
{\textsc{Disjoint Paths}}
{A graph $G=(V,E)$, an integer $m$, and a set $\NN = \{\NN_i = \{s_i, t_i\} | i \in [m], s_i,t_i \in V\}$.}
{Does $G$ contain $m$ pairwise vertex-disjoint paths from $s_{i}$ to $t_{i}$, for $i \in [m]$?}
{The treewidth $\tw$ of $G$.}

\begin{theorem}\label{thm:DisjointPaths}
{\sc Planar Disjoint paths} cannot be solved in time $2^{o(\sqrt{n})} \cdot n^{O(1)}$ unless the ETH fails.
\end{theorem}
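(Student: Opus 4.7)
}} The plan is to reduce from \textsc{Planar $3$-Colorability} on graphs of maximum degree at most $5$, which by Theorem~\ref{th:lb3c} cannot be solved in time $2^{o(\sqrt{n})} \cdot n^{O(1)}$ unless the ETH fails. Given an instance $G=(V,E)$ of this problem with $|V|=n$, I would build in polynomial time a planar instance $(H,\NN)$ of \textsc{Disjoint Paths} with $|V(H)|=O(n)$ such that $G$ is $3$-colorable if and only if $H$ admits vertex-disjoint paths realizing all requests in $\NN$. Since $|V(H)|=O(n)$ and $H$ is planar, a $2^{o(\sqrt{|V(H)|})}\cdot |V(H)|^{O(1)}$ algorithm for \textsc{Planar Disjoint Paths} would yield a $2^{o(\sqrt{n})}\cdot n^{O(1)}$ algorithm for \textsc{Planar $3$-Colorability} on max-degree-$5$ graphs, contradicting Theorem~\ref{th:lb3c}.

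For each vertex $v_i \in V$ I would introduce a \emph{color-selection} gadget consisting of three parallel planar ``color tracks'' between two terminals $s_i, t_i$, so that the unique path from $s_i$ to $t_i$ is forced to pick exactly one of the three tracks, interpreted as the color of $v_i$. Because a vertex of $G$ has up to five neighbors, I would then chain a constant number of constant-size \emph{bifurcate} gadgets, in the spirit of those used in the proof of Theorem~\ref{th:lbcp}, so as to propagate the color choice of $v_i$ to each edge incident to $v_i$; each bifurcate gadget carries its own dedicated terminal pair whose routing forces color-consistency between the track entering the gadget and the tracks leaving it. For each edge $\{v_i,v_j\} \in E$ I would add an \emph{edge} gadget with a small number of additional terminal pairs, designed so that its requests can be simultaneously routed if and only if the color tracks selected on the two sides where the gadget is plugged correspond to \emph{different} colors.

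Since $G$ has maximum degree at most $5$, each vertex and each edge of $G$ is replaced in $H$ by a constant number of constant-size gadgets, and thus $|V(H)|=O(n)$. Planarity is preserved by construction, using whenever necessary the same type of \emph{path-crossing} gadget as in the proof of Theorem~\ref{th:lbcp}, with additional terminal pairs that force any demand path entering such a gadget to leave it ``straight''. Correctness is then routine: from a $3$-coloring of $G$ one selects, in each color-selection and each chained bifurcate gadget, the unique path representing the color of the corresponding vertex, which renders every edge gadget routable thanks to the color-difference property; conversely, any set of vertex-disjoint paths in $H$ satisfying $\NN$ selects one color per vertex of $G$, and the edge gadgets guarantee properness of the induced coloring.

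The main obstacle that I foresee is the design of the edge gadget: I need a planar constant-size gadget with prescribed terminal pairs whose disjoint-paths feasibility is equivalent to a \emph{disequality} between two inputs taken among three values, and moreover this gadget must be composable with the bifurcate-and-track machinery and with the path-crossing gadget without blow-up. The remaining pieces follow the template developed for \textsc{Planar Cycle Packing} in Theorem~\ref{th:lbcp}, suitably adapted to the disjoint-paths setting by replacing internal cycles (asked inside gadgets) with source-to-target routing requests on fresh terminal pairs.
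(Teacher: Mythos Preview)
Your proposal is correct and matches the paper's approach essentially line by line: the paper also reduces from \textsc{Planar $3$-Colorability} with maximum degree at most~$5$, reuses the template of Theorem~\ref{th:lbcp} with SC$_i$-, bifurcate-, edge-, expel-, double-expel-, and path-crossing gadgets, and replaces each ``ask for a cycle'' by a request $\{s,t\}$ on fresh terminal pairs. Your anticipated obstacle (the disequality edge gadget) is exactly what the paper resolves by taking the same edge gadget as in Fig.~\ref{fig:CPEG} and adding three terminal pairs in place of the three asked cycles.
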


\begin{figure}[h]
  \vspace{-.3cm}
  \centering
  \begin{tikzpicture}
    \draw ( 0, 0) \nodd{label=180:$u$}
    -- +  ( 0.5, 1) \nodd{label=90:$s$}
    -- +  ( 1, 0.5) \nodd{label=0:$u'$}
    -- +  ( 1,-0.5) \nodd{label=0:$u''$}
    -- +  ( 0.5,-1) \nodd{label=-90:$t$}
    --cycle;
  \end{tikzpicture}
  \caption{The double-expel gadget.}
  \label{fig:DPE}
\end{figure}
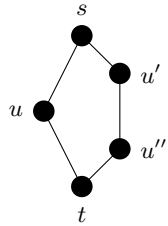

\begin{figure}[t!]
  \centering
  \begin{tikzpicture}
    \draw (0,0) -- + (2,0) \nodd{label=right:$a_i$} -- + (0,-2);
    \draw (0,0) -- + (2,-1) \nodd{label=right:$b_i$} -- + (0,-2);
    \draw (0,0) -- + (2,-2) \nodd{label=right:$c_i$} -- + (0,-2);
    \draw (0,0) \nodd{label=90:$s_{i,0}$};
    \draw (0,-2) \nodd{label=-90:$t_{i,0}$};
  \end{tikzpicture}
  \caption{The SC$_i$-gadget: To keep planarity, there is a path-crossing gadget in each edge intersection.}
  \label{fig:DPSCG}
\end{figure}

\begin{figure}[t!]
  \centering
  \begin{tikzpicture}[every node/.style={transform shape}, scale=0.5]
    \draw  (2,-1) \nodd{label=180:\huge{$s_{i,1}$}} --  +(1,-1) \nodd{} -- +(1,-3) \nodd{} -- +(0,-4) \nodd{label=180:\huge{$t_{i,1}$}} -- +(-1, -2) \nodd{label=left:\huge{$a_i$}} --cycle;
    \draw  (2,-7) \nodd{label=180:\huge{$s_{i,2}$}} --  +(1,-1) \nodd{} -- +(1,-3) \nodd{} -- +(0,-4) \nodd{label=-180:\huge{$t_{i,2}$}} -- +(-1, -2) \nodd{label=left:\huge{$b_i$}} --cycle;
    \draw  (2,-13) \nodd{label=180:\huge{$s_{i,3}$}} --  +(1,-1) \nodd{} -- +(1,-3) \nodd{} -- +(0,-4) \nodd{label=-180:\huge{$t_{i,3}$}} -- +(-1, -2) \nodd{label=left:\huge{$c_i$}} --cycle;
    \draw (5,0) \nodd{label=above:\huge{$a_{i,1}$}} -- +(-1,-1) \nodd{label=180:\huge{$s_{i,4}$}} -- +(-2,-2) -- +(1,-1) \nodd{label=0:\huge{$t_{i,4}$}} --cycle;
    \draw (8,0) \nodd{label=above:\huge{$b_{i,1}$}} -- +(-1,-6) \nodd{label=160:\huge{$s_{i,5}$}} -- +(-5,-8) -- +(1,-6) \nodd{label=0:\huge{$t_{i,4}$}} --cycle;
    \draw (11,0) \nodd{label=above:\huge{$c_{i,1}$}} -- +(-1,-12) \nodd{label=140:\huge{${s_{i,6}}$}} -- +(-8, -14) -- +(1,-12) \nodd{label=0:\huge{$t_{i,6}$}} --cycle;
    \draw (5,-3) \nodd{label=90:\huge{$s_{i,7}$}} --  +(-2,-1) -- +(0,-2) \nodd{label=-90:\huge{$t_{i,7}$}} -- (13,-4) \nodd{label=20:\huge{$a_{i,2}$}} --cycle;
    \draw (5,-9) \nodd{label=90:\huge{$s_{i,8}$}} --  +(-2,-1) -- +(0,-2) \nodd{label=-90:\huge{$t_{i,8}$}} -- (13,-10) \nodd{label=20:\huge{$b_{i,2}$}} --cycle;
    \draw (5,-15) \nodd{label=90:\huge{$s_{i,9}$}} -- +(-2,-1) -- +(0,-2) \nodd{label=-90:\huge{$t_{i,9}$}} -- (13,-16) \nodd{label=20:\huge{$c_{i,2}$}} --cycle;
  \end{tikzpicture}

  \caption{Bifurcate gadget: To keep planarity, there is a path-crossing gadget in each edge intersection.}
  \label{fig:DPEC}
\end{figure}

\begin{figure}[t!]
  \centering
  \begin{tikzpicture}[every node/.style={transform shape}, scale=0.5]
    \draw (8,-3) \nodd{label=90:\huge{$s_{i,j,1}$}} -- +(-5,-1) \nodd{label=left:\huge{$a_{i,k}$}} -- +(0,-2) \nodd{label=-125:\huge{$t_{i,j,1}$}}
    -- (12,-11) \nodd{label=right:\huge{$a_{j,k}$}} --cycle;
    \draw (4,-7) \nodd{label=180:\huge{$s_{i,j,2}$}} -- +(-1,-1) \nodd{label=left:\huge{$b_{i,k}$}} -- +(0,-2) \nodd{label=180:\huge{$t_{i,j,2}$}} -- (12,-8) \nodd{label=right:\huge{$b_{j,k'}$}} --cycle;
    \draw (4,-10) \nodd{label=180:\huge{$s_{i,j,3}$}} -- +(-1,-1) \nodd{label=left:\huge{$c_{i,k}$}} -- +(0,-2) \nodd{label=180:\huge{$t_{i,j,3}$}} -- (12,-4) \nodd{label=right:\huge{$c_{j,k'}$}} --cycle;
  \end{tikzpicture}

  \caption{Edge gadget: To keep planarity, there is a path-crossing gadget in each edge intersection.}
  \label{fig:DPEG}
\end{figure}

\begin{proof}
\label{th:lbdp}
We strongly follow the proof of Theorem \ref{th:lbcp}. Again, we reduce from {\sc Planar 3-Colorability} where the input graph has maximum degree at most 5.
Let $G=(V,E)$ be a planar graph with maximum degree at most 5 with $V = \{v_1, \dots, v_n\}$. We proceed to construct a planar graph $H$ together with a planar embedding. We construct the same graph as in the proof of Theorem \ref{th:lbcp} but where the gadgets are appropriately modified.
We reuse the  \emph{expel} gadget depicted in Fig. \ref{fig:EE} and for each expel gadget, we ask for a path between $s$ and $t$. We redefine  the \emph{double-expel} gadget as depicted in Fig. \ref{fig:DPE} and for each double-expel gadget, we ask for a path between $s$ and $t$. We reuse the \emph{path-crossing} gadget depicted in Fig. \ref{fig:CPPC} and only ask for the paths contained in the expel gadgets. We can now redefine the SC$_i$-gadget depicted in Fig. \ref{fig:DPSCG}, where each edge intersection is replaced with a path-crossing gadget. For each SC$_i$-gadget we ask for a path between $s_{i,0}$ and $t_{i,0}$. We also redefine the \emph{bifurcate} gadget as depicted in Fig. \ref{fig:DPEC}, and for each bifurcate gadget, we ask for a path between $s_{i,k}$ and $t_{i,k}$ for $k \in [9]$. Finally, we redefine the \emph{edge} gadget as depicted in Fig. \ref{fig:DPEG}, and for each edge gadget we ask for a path between $s_{i,j,k}$ and $t_{i,j,k}$ for $k \in [3]$. This completes the construction of the planar graph $H$. It can be easily checked that these gadgets preserve the same properties as the corresponding ones  in the proof of Theorem~\ref{th:lbcp}. Moreover, it is also easy to see that a path in a solution in $H$ cannot turn in a path-crossing gadget.

Given a solution of {\sc Planar Disjoint Paths}  in $H$,  for each $i \in [n]$ the selection of a cycle in the SC$_i$-gadget selects a color for $v_i$, that can be any common color in all color outputs of $v_i$, and the edge gadgets ensure that two adjacent vertices are in two different color classes. So in this way we obtain a solution of  \textsc{Planar 3-Colorability} in $G$.

Conversely, given a solution of \textsc{Planar 3-Colorability} in $G$,  it defines a color output for  $\{a_i, b_i, c_i\}$ for $i \in [n]$. Therefore, we select in the SC$_i$-gadget the path that uses the vertex in $\{a_i, b_i, c_i\}$ corresponding to the color of $v_i$. In each bifurcate gadget, we choose the paths that use the vertices in $\{a_{i,1}, b_{i,1}, c_{i,1}, a_{i,2}, b_{i,2}, c_{i,2}\}$ leading to two identical color outputs that coincide with the color output of $\{a_i, b_i, c_i\}$.
This choice satisfies the property that the color output of $\{a_i, b_i, v_i\}$ is contained in the color outputs of $\{a_{i,1}, b_{i,1}, c_{i,1}\}$ and $\{a_{i,2}, b_{i,2}, c_{i,2}\}$, and leaves as many free vertices as possible for other cycles in other gadgets.
Inside each edge gadget representing $\{v_i, v_j\} \in E$, we select the paths that are allowed by the free vertices. We complete our path selection by selecting a free path in each expel gadget contained in the path-crossing gadget.

As the degree of each vertex in $G$ is bounded by 5, the number of gadgets we introduce for each $v_i \in V(G)$ in order to construct $H$ is also bounded by a constant, so
the total number of vertices of $H$ is linear in the number of vertices of $G$.
Therefore, if we could solve {\sc Planar Disjoint Paths} in time $2^{o(\sqrt{n})}\cdot n^{O(1)}$, then we could also solve {\sc Planar 3-Colorability} in time $2^{o(\sqrt{n})} \cdot n^{O(1)}$, which is impossible by Theorem \ref{th:lb3c} unless the ETH fails.
The theorem follows.
\end{proof}\\

From Theorem~\ref{thm:DisjointPaths} we obtain the following corollary.

\begin{corollary}
\textsc{Planar Disjoint Paths} cannot be solved in time $2^{o(\tw)} \cdot n^{O(1)}$ unless the ETH fails.
\end{corollary}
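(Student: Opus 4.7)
The plan is to derive this corollary as an immediate consequence of Theorem~\ref{thm:DisjointPaths}, following exactly the same template used in the short corollary that followed Theorem~\ref{th:lb3c} in the preliminaries. The key ingredient is the well-known bound that any planar graph $G$ on $n$ vertices satisfies $\tw(G) = O(\sqrt{n})$, as cited from~\cite{Fom06}. This is the bridge between a lower bound expressed in terms of $n$ and one expressed in terms of $\tw$.

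Concretely, I would argue by contraposition. Suppose that \textsc{Planar Disjoint Paths} admits an algorithm running in time $2^{o(\tw)} \cdot n^{O(1)}$. Given any input instance $(G,\NN)$ where $G$ is planar on $n$ vertices, we know that $\tw(G) \le c\sqrt{n}$ for some absolute constant $c$. Plugging this bound into the hypothetical algorithm yields a running time of $2^{o(\sqrt{n})}\cdot n^{O(1)}$ for \textsc{Planar Disjoint Paths}, which contradicts Theorem~\ref{thm:DisjointPaths} unless the ETH fails. One minor point to mention is that we do not even need to compute an optimal tree-decomposition: the bound on $\tw(G)$ is used only as an analytical upper bound on the parameter of the input, so the hypothetical algorithm inherits the $\sqrt{n}$ dependence automatically.

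I do not anticipate any real obstacle in writing the proof, since the structure is completely analogous to the Corollary immediately after Theorem~\ref{th:lb3c}: both are one-step reductions that convert an $n$-based ETH lower bound on planar inputs into a $\tw$-based one by invoking the $O(\sqrt{n})$ treewidth bound for planar graphs. The only care required is to state the implication in the correct direction (a faster $\tw$-algorithm would yield a faster $n$-algorithm) and to make clear that the whole argument is conditional on the ETH, exactly as in Theorem~\ref{thm:DisjointPaths}.
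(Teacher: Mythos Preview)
Your proposal is correct and matches the paper's intended argument: the corollary is stated without proof, immediately after Theorem~\ref{thm:DisjointPaths}, and is clearly meant to follow by the same $\tw(G)=O(\sqrt{n})$ reasoning used in the corollary after Theorem~\ref{th:lb3c}, exactly as you describe.
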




{\small
\bibliographystyle{abbrv}
\bibliography{Bib_SparseDP}
}

\newpage

\begin{appendix}

\section{Proof of Theorem~\ref{th:lb3c}}
\label{sec:3coloring}

We start with defining some planar gadgets.
The first one is depicted in Fig. \ref{fig:C} and  called \emph{color gadget}, C-\emph{gadget} for short.
This gadget ensures that two vertices $u$ and $u'$ are in the same color class.
Note that we can extend the C-gadget for three vertices $u$, $u'$, and $u''$ and ensure the three vertices to be in the same color class by fixing a C-gadget between $u$ and $u'$ and another C-gadget between $u'$ and $u''$.
The second gadget is depicted in Fig. \ref{fig:CC} and called \emph{cross-color} gadget, CC-\emph{gadget} for short.
In this gadget, originally introduced in~\cite{GJE76}, one can check that if $u$,  $v$, $u'$, and $v'$ are in the same face before being connected by the gadget, and oriented in this order around the face, then $u$ and $u'$ are in the same color class and $v$ and $v'$ are in the same color class.

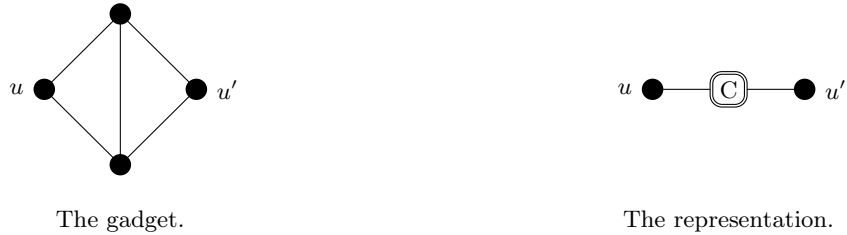
\begin{figure}[h]
  \centering
  \begin{tikzpicture}

    \draw (0,1) \nodd{} -- +(1,-1) \nodd{label=right:$u'$} -- +(0,-2) \nodd{} -- +(-1,-1) \nodd{label=left:$u$} -- +(0,0) -- +(0,-2);
    \capt{-1,-2}{The gadget.};

    \draw (7,0) \nodd{label=left:$u$} -- +(1,0) \noddd{}{C} -- +(2,0) \nodd{label=right:$u'$};
    \capt{7,-2}{The representation.};

  \end{tikzpicture}
  \vspace{-.15cm}
  \caption{Color gadget.}
  \vspace{-.15cm}
  \label{fig:C}
\end{figure}

\begin{figure}[h]
  \vspace{-.7cm}
  \centering
  \begin{tikzpicture}
    \nod{ 0}{ 0};
    \nod{-1}{ 0} edge ( 0, 0) edge (0,1) edge (0,-1) ;
    \nod{ 1}{ 0} edge ( 0, 0) edge (0,1) edge (0,-1) ;
    \nod{ 0}{ 1} edge ( 0, 0) ;
    \nod{ 0}{-1} edge ( 0, 0) ;
    \nod{ 1}{ 1} edge ( 1, 0) ;
    \nod{ 1}{-1} edge ( 0,-1) ;
    \nod{-1}{ 1} edge ( 0, 1) ;
    \nod{-1}{-1} edge (-1, 0) ;
    \nnod{ 0}{ 3}{$u$}{above} edge (1,1)  edge ( 0, 1) edge (-1, 1) ;
    \nnod{ 0}{-3}{$u'$}{below} edge (1,-1) edge ( 0,-1) edge (-1,-1) ;
    \nnod{ 3}{ 0}{$v'$}{right} edge (1,1)  edge ( 1, 0) edge ( 1,-1) ;
    \nnod{-3}{ 0}{$v$}{left } edge (-1,1) edge (-1, 0) edge (-1,-1) ;
    \capt{-1,-4}{The gadget.};
    \draw (8, 1) \nodd{label=above:$u$} -- +(0,-2) \nodd{label=below:$u'$};
    \draw (7, 0) \nodd{label=left:$v$} -- +(1,0) \noddd{}{CC}
    -- +(2,0) \nodd{label=right:$v'$};
    \capt{7,-4}{The representation.};
  \end{tikzpicture}
  \vspace{-.22cm}
  \caption{Cross-color gadget.}
   \vspace{-.22cm}

  \label{fig:CC}
\end{figure}
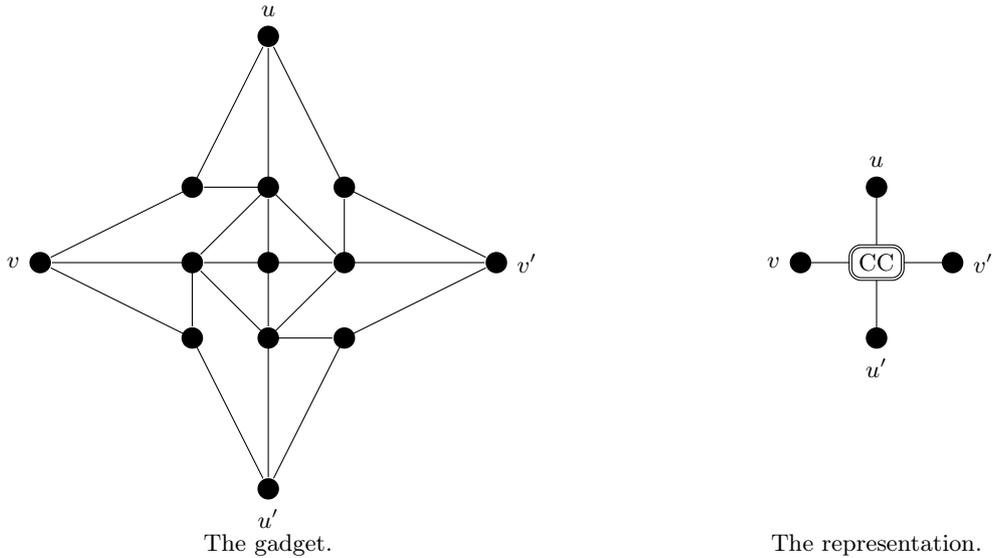

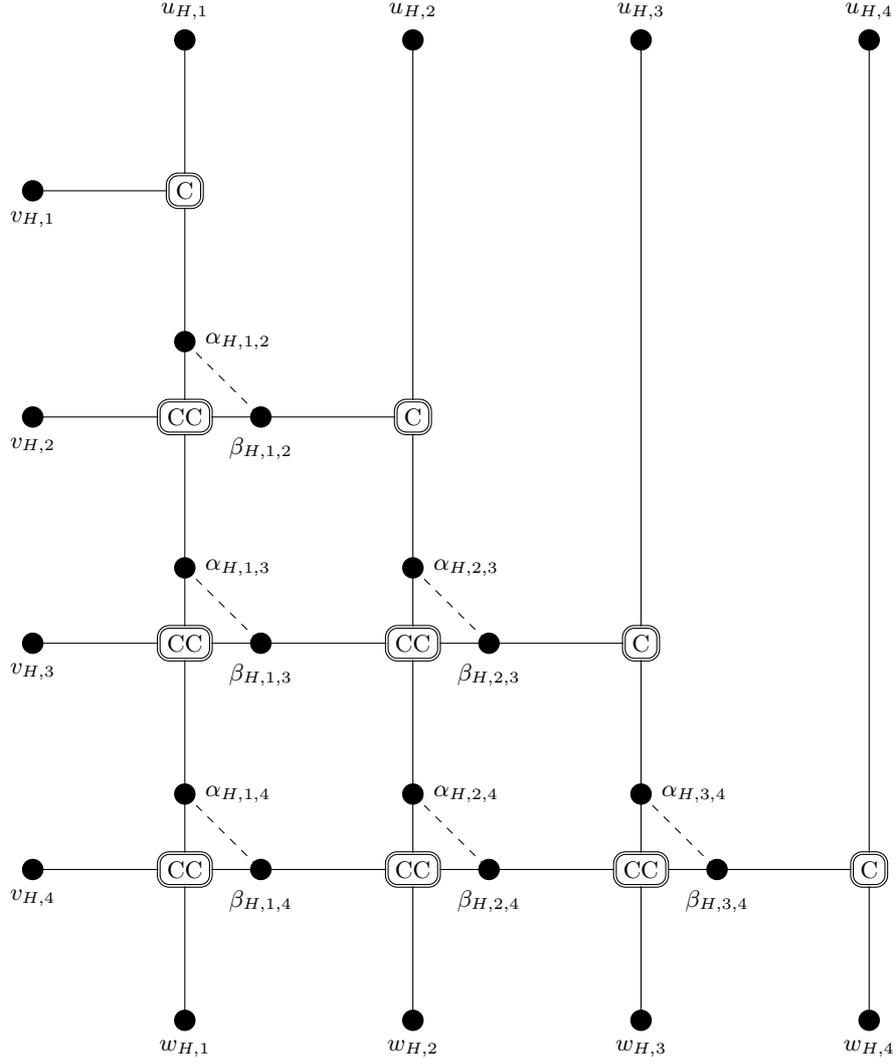
\begin{figure}[t!]
 \vspace{-.4cm}
  \centering
  \begin{tikzpicture}

    \node (n12) at (1,-2) [draw,double,rounded corners]{CC};
    \node (n13) at (1,-5) [draw,double,rounded corners]{CC};
    \node (n14) at (1,-8) [draw,double,rounded corners]{CC};
    \node (n23) at (4,-5) [draw,double,rounded corners]{CC};
    \node (n24) at (4,-8) [draw,double,rounded corners]{CC};
    \node (n34) at (7,-8) [draw,double,rounded corners]{CC};
    \node (c0) at (1,1) [draw,double,rounded corners]{C};
    \node (c1) at (4,-2) [draw,double,rounded corners]{C};
    \node (c2) at (7,-5) [draw,double,rounded corners]{C};
    \node (c3) at (10,-8) [draw,double,rounded corners]{C};
    \draw (1,3) node [place, label=above:$u_{H,1}$]{} -- (c0) -- +(0,-1) -- (n12) -- +(0,-1) -- (n13) -- +(0,-1) -- (n14) -- +(0,-2) node  [place, label=below:$w_{H,1}$]{};
    \draw (4,3) node [place, label=above:$u_{H,2}$]{} -- (c1) -- +(0,-1) -- (n23) -- +(0,-1)  -- (n24) -- +(0,-2) node  [place, label=below:$w_{H,2}$]{};
    \draw (7,3) node [place, label=above:$u_{H,3}$]{} -- (c2) -- (n34) -- +(0,-2) node  [place, label=below:$w_{H,3}$]{};
    \draw (10,3) node [place, label=above:$u_{H,4}$]{} -- (c3) -- +(0,-2) node  [place, label=below:$w_{H,4}$]{};

    \draw (-1,1) node [place, label=below:$v_{H,1}$] {} -- (c0);
    \draw (-1,-2) node [place, label=below:$v_{H,2}$] {} -- (n12) -- (c1);
    \draw (-1,-5) node [place, label=below:$v_{H,3}$] {} -- (n13) -- +(1,0) -- (n23) -- +(1,0) -- (c2);
    \draw (-1,-8) node [place, label=below:$v_{H,4}$] {} -- (n14) -- +(1,0) -- (n24) -- +(1,0) -- (n34) -- +(1,0) -- (c3);
    \draw[dashed]
    (1,-1) node [place,label=right:$\alpha_{H,1,2}$]{} --
    (2,-2) node [place,label=below:$\beta_{H,1,2}$] {};
    \draw[dashed]
    (1,-4) node [place,label=right:$\alpha_{H,1,3}$]{} --
    (2,-5) node [place,label=below:$\beta_{H,1,3}$] {};
    \draw[dashed]
    (1,-7) node [place,label=right:$\alpha_{H,1,4}$]{} --
    (2,-8) node [place,label=below:$\beta_{H,1,4}$] {};
    \draw[dashed]
    (4, -4) node [place,label=right:$\alpha_{H,2,3}$]{} --
    (5,-5) node [place,label=below:$\beta_{H,2,3}$] {};
    \draw[dashed]
    (4, -7) node [place,label=right:$\alpha_{H,2,4}$]{} --
    (5,-8) node [place,label=below:$\beta_{H,2,4}$] {};
    \draw[dashed]
    (7,-7) node [place,label=right:$\alpha_{H,3,4}$]{} --
    (8,-8) node [place,label=below:$\beta_{H,3,4}$] {};

  \end{tikzpicture}
  \caption{An example of graph $H$ for $n = 4$.}
  \label{fig:Hx}
\end{figure}

\begin{figure}[t]
\vspace{-.05cm}
  \centering
  \begin{tikzpicture}

    \draw (1,1) -- (0,0);
    \draw (1,1) -- (0,1);
    \draw (1,1) -- (0,2);
    \draw (1,1) -- (1,0);
    \draw (1,1) -- (2,0);
    \draw (1,1) -- (2,1);
    \draw (1,1) \nodd{} -- (2,2);

    \capt{0,-2}{The original vertex.};

    \draw (4,-1) -- (6,1);
    \draw (4,1) -- (6,1);
    \draw (4,3) -- (6,1);

    \draw (6,1) \nodd{} -- +(1,1) \nodd{} -- +(1,-1) -- +(1,1) -- +(2,0) \nodd{} -- +(1,-1) \nodd{} --cycle;
    \draw (8,1) \nodd{} -- +(1,1) \nodd{} -- +(1,-1) -- +(1,1) -- +(2,0) \nodd{} -- +(1,-1) \nodd{} --cycle;

    \draw (12,-1) -- (10,1);
    \draw (12,1) -- (10,1);
    \draw (12,3) -- (10,1);
    \draw (8,1) -- (8,-1);

    \capt{7,-2}{Simulation of the same vertex with maximum degree 5.};

  \end{tikzpicture}
  \caption{Reducing the maximum degree from 7 to 5.}
  \label{fig:7to5}
\end{figure}
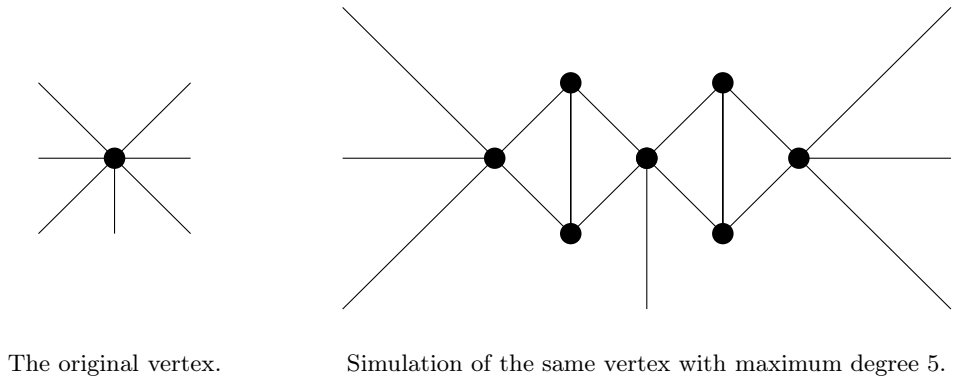

We reduce from \textsc{3-Colorability}. Let $G=(V,E)$ be an input general graph with $n = |V|$ and $V = \{v_1, \dots, v_n\}$, and we define the planar graph  $H$, illustrated in Fig.~\ref{fig:Hx} for $n=4$, as follows:
\begin{itemize}
\item[$\bullet$] For each $i \in [n]$, $u_{H,i}, v_{H,i}, w_{H,i} \in V(H)$;
\item[$\bullet$] For each $i,j \in [n]$, $i < j$, $\alpha_{H,i,j} \in V(H)$ and $\beta_{H,i,j} \in V(H)$;
\item[$\bullet$] For each $i \in \{1, \dots, n-1\}$, there is a C-gadget between $u_{H,i}$ and $\alpha_{H,i-1,i}$;
\item[$\bullet$] For each $i \in \{2, \dots, n\}$, there is a C-gadget between $u_{H,i}$ and $\beta_{H,i-1,i}$;
\item[$\bullet$] There is a C-gadget between $u_{H,n}$ and $w_{H,n}$;
\item[$\bullet$] There is a C-gadget between $u_{H,1}$ and $v_{H,1}$;
\item[$\bullet$] For each $i,j \in \{2, \dots, n-1\}$, $i<j$ there is a CC-gadget between $\alpha_{H,i,j}$, $\beta_{H,i,j}$, $\alpha_{H,i,j+1}$, and $\beta_{H,i-1,j}$;
\item[$\bullet$] For each $i \in \{2, \dots, n-1\}$, $i<j$ there is a CC-gadget between $\alpha_{H,i,n}$, $\beta_{H,i,n}$, $w_{H,i}$, and $\beta_{H,i-1,n}$;
\item[$\bullet$] For each $j \in \{2, \dots, n-1\}$, $i<j$ there is a CC-gadget between $\alpha_{H,1,j}$, $\beta_{H,1,j}$, $\alpha_{H,1,j+1}$, and $v_{H,j}$;
\item[$\bullet$] There is a CC-gadget between $\alpha_{H,1,n}$, $\beta_{H,1,n}$, $w_{H,1}$, and $v_{H,n}$;
\item[$\bullet$] For each $i,j \in [n]$, $i<j$, if $(v_{i},v_{j}) \in E$, then $(\alpha_{H,i,j}, \beta_{H,i,j}) \in E(H)$.
\end{itemize}

As the C-gadget and the CC-gadget are planar, $H$ is indeed planar (see Fig. \ref{fig:Hx}).

Because of the properties on the C-gadget and the CC-gadget, for each $i$ in $[n]$, $u_{H,i}$, $v_{H,i}$, and $w_{H,i}$ are in the same color class.
Because of the edges $(\alpha_{H,i,j}, \beta_{H,i,j})$, if there is an edge between $v_i$ and $v_j$ in $G$, then $u_{H,i}$ and $u_{H,j}$ should receive different colors.
If we have a 3-coloring of $G$, then by coloring $u_{H,i}$ with the color of $v_i$ for each $i \in [n]$, we find a 3-coloring of $H$. Conversely, if we have a coloring of $H$,  for each $i \in [n]$ we color each vertex $v_i$ of $V$ with the color of $u_{H,i}$.

Let us now argue about the maximum degree of the graph $H$. With the previous construction described so far, $H$ has maximum degree 7. In order to restrict it to 5, we replace each vertex of degree 6 or 7 with the two color gadgets shown in Fig.~\ref{fig:7to5}.

Let us finally argue about the number of vertices of $H$. Note that $H$ can be seen as a spanning subgraph of a grid of size $n$, where each vertex either has been replaced by a C-gadget or a CC-gadget, or it has been removed. As these two gadgets have at most $13$ vertices, and in the worst case, all these new vertices have degree 7 and we need to replace them with two color gadgets, that have 7 vertices each,  we have that $|V(H)| \leq 65\cdot n^2$. As {\sc 3-Colorability} cannot be solved in time $2^{o(n)}\cdot n^{O(1)}$ unless the ETH fails~\cite{IPZ01}, the theorem follows.

\end{appendix}

\end{document}